\DeclareMathAlphabet{\mathcal}{OMS}{cmsy}{b}{n}
\DeclareMathAlphabet{\mathcal}{OMS}{cmsy}{m}{n}
\title[ New extension constructions of optimal FHS sets]{\large New extension constructions of optimal frequency hopping sequence sets}
\email{rurustef1212@gmail.com}
\thanks{This work is supported by the National Science Foundation
of China(Grant No. 61401369), the Youth Science and Technology
Fund of Sichuan Province(No.2017JQ0059) }
\newtheorem{Theorem}{Theorem}
\newtheorem{Corollary}{Corollary}
\newtheorem{Example}{Example}
\newtheorem{Remark}{Remark}
\newtheorem{Lemma}{Lemma}
\newtheorem{Construction}{Construction}
\def\ZZ{\mathbb{Z}}
\def\FF{\mathbb{F}}
\def \mF {\mathcal{F}}
\def \mE {\mathcal{E}}
\def \mS {\mathcal{S}}
\def \mU {\mathcal{U}}
\def \mX {\mathcal{X}}
\def \mY {\mathcal{Y}}
\def\lpf {{\rm lpf}}
\def \bx {{\bf x}}
\def\be {{\bf e}}
\def \by {{\bf y}}
\def \bs {{\bf s}}
\def \bu {{\bf u}}
\def\ord{{\rm ord}}
\begin{document}
\maketitle
\thispagestyle{empty}

\centerline{\scshape Xianhua Niu}
{\footnotesize
 \centerline{School of Computer and Software Engineering}
  \centerline{Xihua University}
   \centerline{National Key Laboratory of Science and Technology on Communications}
    \centerline{University of Electronic Science and Technology of China}
     \centerline{Chengdu, China.}
} 

\medskip
\centerline{\scshape Chaoping Xing}

{\footnotesize
 \centerline{Division of Mathematical Sciences}
  \centerline{School of Physical and Mathematical Sciences}
   \centerline{Nanyang Technological University}
    \centerline{Singapore.}
} 

\begin{abstract}
In this paper, a general framework of constructing optimal frequency hopping sequence (FHS) sets is presented based on the designated direct product. Under the framework, we obtain infinitely many new optimal FHS sets by combining a family of sequences that are newly constructed in this paper with some known optimal FHS sets. Our constructions of optimal FHS sets are also based on extension method. However, our constructions remove the constraint requiring that the extension factor is co-prime with the length of original FHSs and get new parameters. In literature, most of the extension constructions suffer from this constraint. As a result, our constructions allow a great flexibility of choosing parameters of FHS sets for a given frequency-hopping spread spectrum system.
\\
{\scshape \footnotesize Keywords.}{ Frequency hopping sequences, Optimal Hamming correlation, Extension construction.}
\end{abstract}

\section{Introduction}

Frequency hopping (FH) multiple-access (MA) spread spectrum systems, with its anti-jamming, secure and multiple access properties, have
found many applications in military radio communications, mobile communications, modern radar and sonar echolocation systems \cite{Fan,Golomb}.
In such systems, multiple frequency-shift keying (MFSK) is the data modulation technique employed, and wideband signals are generated by hopping from one frequency slot to anther over a large number of frequency slots. The frequency slots used are chosen pseudo-randomly by a code called frequency hopping sequence (FHS). As is often the case, in a multiple access environment, an important requirement is to keep the mutual interference between transmitters on a level as low as possible. This mutual interference occurs when two or more sources transmit the same frequency slot at the same time. The degree of the mutual interference is clearly related to the Hamming correlation properties of the frequency hopping sequences \cite{Fan,Simon}. In order to improve their performance, it is desirable to employ frequency hopping sequences (FHSs) having low Hamming correlation to reduce the multiple-access interference (also called hits) of frequencies \cite{LG}. Thus, the design of an FHS set with good property is an important problem.

The main purpose of FHS design is to find an FHS or an FHS set which is optimal under a given condition. In general, the optimality of an FHS set is measured by the Peng-Fan bound \cite{PF}, whereas that of a single FHS is by the Lempel-Greenberger bound \cite{LG}. There are several algebraic or combinatorial constructions for optimal FHSs or FHS sets in the literature \cite{KUMAR}-\cite{XU-16}. Moreover, some extension methods have been proposed \cite{Chung-09}-\cite{Xu-16}, which generate several new families of optimal FHSs. By applying some known FHS sets, the existing extension constructions have produced new FHS sets with some desirable parameters. However, most of the extension constructions suffer from the constraint that the extension factor is co-prime with the length of original FHS sets.

In this paper, we present a general framework of constructing optimal FHS sets based on the designated direct product. Under the framework, we obtain infinitely many new optimal FHS sets which increase the length and alphabet size of the original FHS set by using flexible extension factor, but preserve its maximum Hamming correlation, as summarized in Table \ref{tab:1}. Moreover, new constructions remove the constraint requiring that the extension factor is co-prime with the length of original FHSs. As a result, we have a great flexibility of choosing parameters of FHS sets for a given frequency-hopping spread spectrum system.

\begin{table}[tbp]\label{tab:1}\small
\newcommand{\tabincell}[2]{\begin{tabular}{@{}#1@{}}#2\end{tabular}}
\centering  
\caption{The Extended FHS Set From a $(N, v, \lambda; M)$ -FHS Set $\mX$. }

\begin{tabular}{l|l|l|l}  
\hline
 Extended FHS set & Constraints &References &Remarks \\ \hline  

$(nN,nv,\lambda;M)$ &\tabincell{l}{$m(\mX)\leq \lpf(n)-1$, $\gcd(n,N)=1$.} &\cite{Chung-14}\cite{Xu-16} & \\ \hline
$(nN,nv,\lambda;M)$ &\tabincell{l}{$m(\mX)\leq \lpf(n)-1$. } &\footnotesize{Corollary 1}  &{\footnotesize{co-prime condition is removed}}\\ \hline \hline
$((q-1)N,qv,\lambda;M)$ &\tabincell{l}{$m(\mX)\leq q$, $\gcd((q-1),N)=1$.} &\cite{Chung-14}\cite{Xu-16} & \\ \hline
$((q-1)N,qv,\lambda;M)$ &\tabincell{l}{$m(\mX)\leq q$.} &\footnotesize{Corollary 3} &{\footnotesize{co-prime condition is removed}}\\ \hline
$(dN,qv,\lambda;M)$ &\tabincell{l}{$m(\mX)\leq p$, $d=(p-1)p^{a-1}$.} &\footnotesize{Theorem 2}  &\footnotesize{different parameter regime}\\ \hline \hline
$(n(q-1)N,nqv,\lambda;M)$ &\footnotesize{\tabincell{l}{$m(\mX)\leq \min\{\lpf(n)-1,q\}$, $\gcd(n,N)=1$,\\$\gcd((q-1),N)=\gcd((q-1),n)=1$.} } &\cite{Chung-14}\cite{Xu-16}  & \\ \hline
$(n(q-1)N,nqv,\lambda;M)$ &\tabincell{l}{$m(\mX)\leq \min\{\lpf(n)-1,q\}$.} &\footnotesize{Corollary 5} &{\footnotesize{co-prime condition is removed}}\\ \hline
$(ndN,nqv,\lambda;M)$ &\tabincell{l}{$m(\mX)\leq \min\{\lpf(n)-1,p\}$, \\$d=(p-1)p^{a-1}$.} &\footnotesize{Corollary 4} &{\footnotesize{different parameter regime}}\\ \hline
\end{tabular}
\footnotesize In  Table \ref{tab:1}, $\lpf(n)$ denotes the least prime factor of $n$, $m(\mX)$ denotes the maximum number of appearance of frequency slot in FHS set $\mX$ and $q=p^a$ for a prime $p$.
\end{table}
The rest of this paper is organized as follows. In Section 2, we give some preliminaries to FHSs. In Section 3, we present a general framework to construct optimal FHS sets based on the designated direct product. In Section 4, we obtain the new construction of optimal FHS sets with length $nN$. In Section 5, we give a construction of optimal FHS sets with new parameter of length $dN$ or $ndN$. Finally, we conclude the paper in Section  6.

\section{Preliminaries}

Throughout this paper, the following notations will be used:

$p_1, \ldots, p_r:$  primes with $p_1 <\ldots< p_r$;

$q_1:$ power of prime $p_1$;

$q:$ power of prime $p$;

$\langle x\rangle_y:$ the least nonnegative residue of $x$ modulo $y$ for an integer $x $ and a positive integer $y$;

$\ZZ_n:$ the ring of integers modulo $n$ for a positive integer $n>1$;

$\ord(g):$ the multiplicative order of $g \in \ZZ_{n}^*$;

$\lceil z\rceil:$ the largest integer less than or equal to $z$;

$(N,v,\lambda):$ an FHS of length $N$ over a frequency slot set of size $v$, with the maximum Hamming autocorrelation $\lambda$;

$(N,v,\lambda;M):$ an FHS set of $M$ sequences of length $N$ over a frequency slot set of size $v$, with the maximum Hamming correlation $\lambda$.

Let $\mF=\{f_1,f_2,\ldots,f_v\}$ be a frequency slot set with size $|\mF|=v$, $\mX$ be a set of $M$ FHSs of
length $N$. For any two FHSs $\bx_i=\left(x_i(0),x_i(1),\ldots,x_i(N-1)\right)$, $\bx_j=\left(x_j(0),x_j(1),\ldots,x_j(N-1)\right)\in \mX$, $0\leq i\neq j\leq M-1$, the Hamming correlation function $H_{\bx_i\bx_j}(\tau)$ of sequences $\bx_i$ and $\bx_j$ at time delay $\tau$ is defined as follows:
\begin{equation}\label{e1}
H_{\bx_i\bx_j}(\tau)=\sum^{N-1}_{t=0}h\left(x_i(t),x_j(\langle t+\tau\rangle_N)\right), ~~0\leq\tau\leq N-1,
\end{equation}
where $h(a, b)=1$ if $a=b$, and $h(a, b)=0$ otherwise. And only positive time shifts are considered.

For any given FHS set $\mX$, the maximum Hamming autocorrelation $H_a(\mX)$, the maximum Hamming crosscorrelation $H_c(\mX)$ and the maximum Hamming correlation $H_m(\mX)$ are defined as follows, respectively:
\begin{eqnarray*}
  H_a(\mX)&=&\max\limits_{0<\tau<N}\left\{H_{\bx_i\bx_i}(\tau):~\bx\in \mX\right\},\\
  H_c(\mX)&=&\max\limits_{0\leq \tau <N}\left\{H_{\bx_i\bx_j}(\tau):~\bx_i,\bx_j\in \mX, i\neq j\right\}.\\
  H_{m}(\mX)&=&\max\left\{H_a(\mX),H_c(\mX)\right\}.
\end{eqnarray*}

In 2004, Peng and Fan \cite{PF} established the following bound of an FHS set.

\begin{Lemma}[Peng-Fan bound]\label{PF-bound}
 Let $\mX$ be an FHS set of $M$ sequences and length $N$ over a given frequency slot set $\mF$ of size $v$, we have
\begin{equation}\label{e2}
  H_{m}(\mX)\geq \left\lceil\frac{(MN-v)N}{(MN-1)v}\right\rceil.
\end{equation}
\end{Lemma}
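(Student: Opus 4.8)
The plan is to estimate the total Hamming correlation of the entire set by a double-counting argument and then to compare the outcome against the trivial diagonal contribution. First I would introduce, for each sequence $\bx_i$ and each frequency slot $f\in\mF$, the occurrence count $m_{i,f}=\abs{\{t:x_i(t)=f\}}$ together with the aggregate $M_f=\sum_{i=0}^{M-1}m_{i,f}$, so that $\sum_{f\in\mF}M_f=MN$. The central object is the grand sum
$$S=\sum_{i=0}^{M-1}\sum_{j=0}^{M-1}\sum_{\tau=0}^{N-1}H_{\bx_i\bx_j}(\tau),$$
which I would evaluate in two different ways and then squeeze.

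For the lower bound on $S$, I would use the fact that, for fixed $i$ and $j$, as $t$ and $\tau$ independently range over $\{0,\ldots,N-1\}$ the pair $(t,\langle t+\tau\rangle_N)$ runs over every ordered pair of positions exactly once. Expanding the definition \eqref{e1}, this yields $\sum_{\tau,t}h(x_i(t),x_j(\langle t+\tau\rangle_N))=\sum_{f\in\mF}m_{i,f}m_{j,f}$, and summing over $i,j$ gives $S=\sum_{f\in\mF}M_f^2$. Applying the Cauchy--Schwarz inequality (equivalently, convexity) to the constraint $\sum_f M_f=MN$ then produces $S\ge (MN)^2/v$. For the upper bound on $S$, I would isolate the $M$ trivial terms coming from $i=j$ and $\tau=0$, each equal to $N$, which contribute $MN$ in total; every one of the remaining $M^2N-M=M(MN-1)$ terms is a Hamming autocorrelation at a nonzero shift or a Hamming crosscorrelation, hence is bounded above by $H_m(\mX)$. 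This gives $S\le MN+M(MN-1)H_m(\mX)$.

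Chaining the two estimates and solving the resulting inequality for $H_m(\mX)$ yields $H_m(\mX)\ge (MN-v)N/\big((MN-1)v\big)$; since $H_m(\mX)$ is a nonnegative integer, I may round up to the ceiling, giving \eqref{e2}. The two evaluations of $S$ are routine; the only points requiring genuine care are the combinatorial identity that $(t,\langle t+\tau\rangle_N)$ enumerates all ordered position pairs (so that the cyclic shift does not distort the count) and the correct separation of the $M$ diagonal terms from the $M(MN-1)$ nontrivial ones. I expect this bookkeeping of which triples $(i,j,\tau)$ are trivial to be the main place where an error could creep in, as everything else is forced by Cauchy--Schwarz and integrality.
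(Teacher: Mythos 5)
Your proof is correct, but there is nothing in the paper to compare it against: Lemma 1 is stated as an imported result, with a citation to Peng and Fan's 2004 paper, and the authors give no proof of their own. What you have written is essentially the standard double-counting derivation of the Peng--Fan bound, and every step checks out. For fixed $i,j$ the map $(t,\tau)\mapsto(t,\langle t+\tau\rangle_N)$ is a bijection of $\{0,\ldots,N-1\}^2$, so $\sum_{\tau=0}^{N-1}H_{\bx_i\bx_j}(\tau)=\sum_{f\in\mF}m_{i,f}m_{j,f}$, whence $S=\sum_{f\in\mF}M_f^2\ge (MN)^2/v$ by Cauchy--Schwarz; the trivial triples $(i,i,0)$ number exactly $M$ and contribute $N$ each, and each of the remaining $M^2N-M=M(MN-1)$ triples is covered by the paper's definitions of $H_a(\mX)$ (case $i=j$, $\tau\neq0$) and $H_c(\mX)$ (case $i\neq j$, any $\tau$), hence bounded by $H_m(\mX)$; solving $(MN)^2/v\le MN+M(MN-1)H_m(\mX)$ gives $H_m(\mX)\ge (MN-v)N/\bigl((MN-1)v\bigr)$, and integrality of $H_m(\mX)$ (it is a maximum of sums of $0$--$1$ terms) justifies passing to the ceiling. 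One minor remark: when $MN<v$ the right-hand side is nonpositive and the bound holds trivially, so no case distinction is needed. Your argument buys a self-contained proof where the paper treats the bound as a black box, and it reproduces, in streamlined form, the counting technique used in the original Peng--Fan paper.
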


An $(N,v,\lambda;M)$ FHS set $\mX$ is called {\it optimal} if the Peng-Fan bound in Lemma \ref{e2} is met with equality. An $(N,v,\lambda)$ FHS $\bx$ is called {\it optimal} if the Lempel-Greenberger bound is met with equality.

In practical applications, the required length and alphabet size of an FHS or an FHS set are variable according to the specification of a given system or environment. Thus, it is very important to select optimal FHSs or FHS sets with flexible parameters under the given condition.

\begin{Lemma}
 Let $\mX=\left\{\bx_i=(x_i(0),x_i(1),\ldots,x_i(N-1)):0\leq i\leq M-1\right\}$ be an FHS set over frequency slot set $\mF=\{f_1,f_2,\ldots,f_v\}$. For any $f_k\in \mF$, let
 \begin{equation*}
\Phi_k=\{(i,a)| x_i(a)=f_k\}, 1\leq k \leq v.
\end{equation*}

Then, the maximum number of appearance of any frequency slot $f_k \in \mF$ in FHS set $\mX$, denoted by  $m(\mX)$, can be written as $m(\mX)=\max\limits_{1\leq k\leq v}\left\{\mid \Phi_k\mid:f_k\in \mF\right\}$.
\end{Lemma}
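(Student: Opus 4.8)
The plan is to treat this as a counting lemma that merely re-expresses the definition of $m(\mX)$ in terms of the index sets $\Phi_k$, so the proof will be a short bijective bookkeeping argument. First I would fix a frequency slot $f_k\in\mF$ and make precise what an \emph{appearance} of $f_k$ in $\mX$ means: it is an occurrence of the value $f_k$ at some time index $a$ inside some sequence $\bx_i$, and such an occurrence is uniquely recorded by the pair $(i,a)$ with $0\le i\le M-1$ and $0\le a\le N-1$. The map sending each appearance of $f_k$ to its index pair $(i,a)$ then lands exactly in the set $\Phi_k=\{(i,a)\mid x_i(a)=f_k\}$, and I would argue this map is a bijection onto $\Phi_k$.

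The key step is to verify the bijection carefully. Since each sequence entry $x_i(a)$ is a single well-defined element of $\mF$, two distinct appearances of $f_k$ must differ in the sequence they occupy or in their time position, hence give distinct pairs, so the map is injective; conversely, every element of $\Phi_k$ is a legitimate position $(i,a)$ in the allowed ranges and recovers exactly one appearance, so the map is surjective. Therefore the total number of appearances of $f_k$ across all sequences of $\mX$ equals $\abs{\Phi_k}$, with no over- or under-counting.

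Finally, invoking the definition of $m(\mX)$ as the maximum, over all frequency slots, of the number of times a slot appears in $\mX$, I would take the maximum of $\abs{\Phi_k}$ over $1\le k\le v$ to obtain $m(\mX)=\max_{1\le k\le v}\{\abs{\Phi_k}:f_k\in\mF\}$. The whole argument is elementary, and there is no genuine obstacle: the only point meriting a line of justification is that the appearance-to-pair correspondence is a true bijection, which follows immediately from the fact that each position $(i,a)$ carries a single frequency value and hence cannot be charged to two different slots.
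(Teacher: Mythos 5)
Your proof is correct. Note that the paper states this lemma without any proof at all---it is essentially the definition of $m(\mathcal{X})$ packaged as a lemma---so your bijective bookkeeping (each appearance of $f_k$ corresponds to exactly one pair $(i,a)\in\Phi_k$, hence the count of appearances is $|\Phi_k|$, and then one takes the maximum over $k$) is precisely the routine verification that the paper leaves implicit.
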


In the following section, a framework based on the designated direct product will be given, which can be used to construct optimal FHS sets with new parameters.

\section{A General Framework of Extension construction of FHS set}
In this section, we give a framework of extension construction of FHS set based on the designated direct product by combining a family of sequences with some known optimal FHS sets.

For a $(N,v,\lambda; M)$ FHS set $\mX=\{\bx_i:0\leq i\leq M-1\}$ over $\mF$, with
\[
\bx_i=(x_i(0),x_i(1),\ldots,x_i(N-1)).
\]

Let $\mE=\{\be_\delta:0\leq \delta\leq l-1\}$ be a sequence set of length $n$ over $\ZZ_m$ with
\[
\be_\delta=(e_\delta(0),e_\delta(1),\ldots,e_\delta(n-1)).
\]

If $m(\mX)\leq l$, let $\omega_i(t_2)$ be a function with $\omega_i(t_2)=\varphi_{f_k}(i,t_2)$, where $\varphi_{f_k}$ is a injective function from $\Phi_k$ to $Z_{l}^*$ for every $k$ with $0\leq k \leq v$, $0\leq i\leq M-1$, $0\leq t_2\leq N-1$, the set $\Phi_k$ is defined as Lemma 2.

Then, an $n\times N$ matrix is formed by combining the sequence $\be_{\omega_{i}(t_2)}$ with the FHS $\bx_{i}$ as follows:
\begin{eqnarray}\label{e4}\small
\ \ \ \bu_{i}\!\!&\!\!=\!\!&\!\!\left( {\begin{array}{*{20}{c}}
   {\left(e_{\omega_{i}(0)}(0),x_{i}(0)\right)} & {\left(e_{\omega_{i}(1)}(0),x_{i}(1)\right)} &  \ldots  & {\left(e_{\omega_{i}(N-1)}(0),x_{i}(N-1)\right)}  \\
   {\left(e_{\omega_{i}(0)}(1),x_{i}(0)\right)} & {\left(e_{\omega_{i}(1)}(1),x_{i}(1)\right)} &  \ldots  & {\left(e_{\omega_{i}(N-1)}(1),x_{i}(N-1)\right)}  \\
    \vdots  &  \vdots  &  \ddots  &  \vdots   \\
   {\left(e_{\omega_{i}(0)}(n-1),x_{i}(0)\right)} & {\left(e_{\omega_{i}(1)}(n-1),x_{i}(1)\right)} &  \ldots  & {\left(e_{\omega_{i}(N-1)}(n-1),x_{i}(N-1)\right)}  \\
\end{array}} \right)\\
\!\!&\!\!=\!\!&\!\!\left( {\begin{array}{*{20}{c}}
   {{u_{i}(0)}} & {{u_{i}(1)}} &  \ldots  & {{u_{i}(N-1)}}  \\
   {{u_{i}(N)}} & {{u_{i}(N+1)}} &  \ldots  & {{u_{i}(2N-1)}}  \\
    \vdots  &  \vdots  &  \ddots  &  \vdots   \\
   {{u_{i}((n-1)N)}} & {{u_{i}((n-1)N + 1)}} &  \ldots  & {{u_{i}(nN-1)}}  \\
\end{array}} \right).\nonumber
\end{eqnarray}

By reading the elements in $\bu_i$ row by row, we get an extended FHS $\bu_{i}=(u_{i}(0), u_{i}(1),\ldots, u_{i}(nN-1))$ of period $nN$. Thus, we can obtain the extended FHS set $\mU=\{\bu_i:0\leq i\leq M-1\}$, where the length $n$ of sequence $\be_\delta$ is called the extension factor. For short, we write the extended FHS $\bu_{i}$ as
\[
\bu_{i}=E\left[\be_{\omega_{i}},\bx_{i}\right]=E\left[\left(\be_{\omega_{i}(0)},x_{i}(0)\right),\left(\be_{\omega_{i}(1)},x_{i}(1)\right),\ldots,\left(\be_{\omega_{i}(N-1)},x_{i}(N-1)\right)\right],
\]
where $E$ is the extension operator.

Another extended FHS $\bu_{j}$ can be generated by combining $\be_{\omega^{j}}$ with $\bx_{j}$ as follows:
\[
\bu_{j}=E\left[\be_{\omega_{j}},\bx_{j}\right]=E\left[\left(\be_{\omega_{j}(0)},x_{j}(0)\right),\left(\be_{\omega_{j}(1)},x_{j}(1)\right),\ldots,\left(\be_{\omega_{j}(N-1)},x_{j}(N-1)\right)\right].
\]

Consider its cyclical shift version $L^{(\tau)}(\bu_{j})$, where $L$ is the (left cyclical) shift operator, $\tau=N\tau_{1}+\tau_{2}$, $0\leq\tau_{1}\leq n-1$, $0\leq\tau_{2}\leq N-1$. By the matrix representation, $L^{(\tau)}(\bu_{j})$ could be written as

 \begin{equation} \label{e5}\footnotesize
  \left( {\begin{array}{*{20}{c}}
  {\left(e_{\omega_{j}(\tau_2)}(\tau_1),x_{j}(\tau_2)\right)} &\ldots  & {\left(e_{\omega_{j}(N-1)}(\tau_1),x_{j}(N\!-\!1)\right)} &   \ldots  & {\left(e_{\omega_{j}(\tau_2-1)}(\tau_1),x_{j}(\tau_2\!-\!1)\right)}  \\
       \vdots  &  \vdots  &\vdots  &  \ddots  &  \vdots   \\
    {\left(e_{\omega_{j}(\tau_2)}(n\!-\!1),x_{j}(\tau_2)\right)} &\ldots  & {\left(e_{\omega_{\!j}(N-1)}(n-1),x_{j}(N\!-\!1)\right)} & \ldots  & {\left(e_{\omega_{j}(\tau_2-1)}(n-1),x_{j}(\tau_2\!-\!1)\right)}  \\
   {\left(e_{\omega_{j}(\tau_2)}(0),x_{j}(\tau_2)\right)} &\ldots  & {\left(e_{\omega_{\!j}(N-1)}(0),x_{j}(N\!-\!1)\right)} &   \ldots  & {\left(e_{\omega_{j}(\tau_2-1)}(0),x_{j}(\tau_2\!-\!1)\right)}  \\
    \vdots  &  \vdots  &\vdots  &  \ddots  &  \vdots   \\
   {\left(e_{\omega_{j}(\tau_2)}(\tau_1\!-\!1),x_{j}(\tau_2)\right)} &\ldots  & { \left(e_{\omega_{j}(N-1)}(\tau_1\!-\!1),x_{j}(N\!-\!1)\right)} & \ldots  & {\left(e_{\omega_{j}(\tau_2-1)}(\tau_1\!-\!1),x_{j}(\tau_2\!-\!1)\right)}  \\
\end{array}} \right).
\end{equation}
Obviously, $L^{(\tau)}(\bu_{j})$ is just another extended FHS. Namely, we have
\begin{equation*}\footnotesize
L^{(\tau)}\!(\bu_{j})=E\left[\left(L^{(\tau_1)}(\be_{\omega_{j}(\tau_2)}),x_{j}(\tau_2)\right),\ldots,
\left(L^{(\tau_1)}(\be_{\omega_{j}(N-1)}),x_{j}(N-1)\right),\ldots,
\left(L^{(\tau_1)}(\be_{\omega_{j}(\tau_2-1)}),x_{j}(\tau_2-1)\right)\right].
\end{equation*}

Then, the Hamming correlation function between the extended FHSs $\bu_{i}$ and $\bu_{j}$ at shift $\tau$ becomes the following from (\ref{e4}) and (\ref{e5}),
i.e.,
\begin{eqnarray}\label{e6}
H_{\bu_{i}\bu_{j}}(\tau)
&=&\sum\limits_{t_2=0}^{N-1}\left(\sum\limits_{t_1=0}^{n-1}h\left(e_{\omega_{i}(t_2)}(t_1),(e_{\omega_{j}(t_2+\tau_2)}(t_1+\tau_1))\right)\right)\cdot h\left(x_{i}(t_2),x_{j}(t_2+\tau_2)\right)\nonumber \\
&=&\sum\limits_{t_2=0}^{N-1}\left(H_{\be_{\omega_{i}(t_2)},\be_{\omega_{i}(t_2+\tau_2)}}(\tau_1)\right)\cdot h\left(x_{i}(t_2),x_{j}(t_2+\tau_2)\right).\
\end{eqnarray}

\begin{Lemma}\label{L1}
With the above notation, the nontrivial Hamming correlation between $\bu_{i}$ and $\bu_{j}$ is
 \begin{eqnarray*}
H_{\bu_{i}\bu_{j}}(\tau)\leq\left\{ {\begin{array}{*{20}{c}}
 NH_a(\mE), & i= j,\tau_2=0,\tau_1\neq0,\\
 \lambda H_m(\mE), & i\neq j\  or\  \tau_2\neq0,\\
\end{array}} \right.
\end{eqnarray*}

for any $0\leq i,j\leq M-1, 0\leq \tau_1\leq n-1, 0\leq \tau_2\leq N-1$.

\end{Lemma}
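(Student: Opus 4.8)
The plan is to start directly from the factorized correlation identity (\ref{e6}), which is already derived in the excerpt, and to bound each of the two regimes separately; the crux of the whole argument is a careful use of the injectivity of the labelling maps $\varphi_{f_k}$ in the second regime.

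First I would dispose of the case $i=j$, $\tau_2=0$, $\tau_1\neq 0$. Here $\omega_j(t_2+\tau_2)=\omega_i(t_2)$ and $h(x_i(t_2),x_i(t_2))=1$ for every $t_2$, so (\ref{e6}) collapses to $\sum_{t_2=0}^{N-1}H_{\be_{\omega_i(t_2)},\be_{\omega_i(t_2)}}(\tau_1)$. Since $\tau_1\neq 0$, each summand is a nonzero-shift autocorrelation of a single sequence of $\mE$, hence at most $H_a(\mE)$; summing the $N$ terms yields the bound $NH_a(\mE)$.

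The second regime, $i\neq j$ or $\tau_2\neq 0$, is where the real work lies. I would first note that only those $t_2$ with $h(x_i(t_2),x_j(t_2+\tau_2))=1$ can contribute to (\ref{e6}). For such a $t_2$ we have $x_i(t_2)=x_j(t_2+\tau_2)=f_k$ for some frequency slot $f_k$, so that both $(i,t_2)$ and $(j,t_2+\tau_2)$ lie in $\Phi_k$. The key step is the observation that, because $\varphi_{f_k}$ is injective on $\Phi_k$ and $(i,t_2)\neq(j,t_2+\tau_2)$ — which holds precisely because $i\neq j$ or $\tau_2\neq 0$ — one gets $\omega_i(t_2)=\varphi_{f_k}(i,t_2)\neq\varphi_{f_k}(j,t_2+\tau_2)=\omega_j(t_2+\tau_2)$. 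Thus every contributing summand is a genuine crosscorrelation between two distinct sequences of $\mE$, bounded by $H_m(\mE)$; in particular it can never equal the degenerate value $n$ coming from the zero-shift autocorrelation of a sequence with itself. Factoring $H_m(\mE)$ out of (\ref{e6}) leaves $H_m(\mE)\sum_{t_2=0}^{N-1}h(x_i(t_2),x_j(t_2+\tau_2))=H_m(\mE)\,H_{\bx_i\bx_j}(\tau_2)$, and since $i\neq j$ or $\tau_2\neq 0$ the shift $\tau_2$ is a nontrivial shift for $\mX$, so $H_{\bx_i\bx_j}(\tau_2)\le\lambda$, giving the claimed bound $\lambda H_m(\mE)$.

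The hard part will be exactly this injectivity argument, and I would flag it as the load-bearing step rather than the routine term-by-term estimates. Without the hypothesis $m(\mX)\le l$ and the injective labels $\varphi_{f_k}$ taking values in $\ZZ_l^*$, two distinct positions carrying the same frequency could be assigned the same $\mE$-index, producing a summand equal to $n$ and destroying the bound. I would therefore emphasize that the role of $m(\mX)\le l$ is precisely to supply enough distinct labels to keep each $\varphi_{f_k}$ injective, which is what converts every potentially degenerate autocorrelation term into a controllable crosscorrelation term and makes the uniform factor $H_m(\mE)$ legitimate.
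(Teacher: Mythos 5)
Your proposal is correct and follows essentially the same route as the paper: the same two-case split of $\tau=N\tau_1+\tau_2$, the same use of the factorized identity (\ref{e6}), bounding the $N$ autocorrelation terms by $H_a(\mE)$ when $i=j$, $\tau_2=0$, $\tau_1\neq 0$, and in the other regime using the injectivity of $\varphi_{f_k}$ to force $\omega_i(t_2)\neq\omega_j(\langle t_2+\tau_2\rangle_N)$ at every contributing position so that each term is a crosscorrelation bounded by $H_c(\mE)\leq H_m(\mE)$, leaving at most $H_{\bx_i\bx_j}(\tau_2)\leq\lambda$ contributing terms. Your write-up is in fact more explicit than the paper's at the load-bearing step (the paper only says ``based on the definition of $\omega$'' where you spell out why $(i,t_2)\neq(j,\langle t_2+\tau_2\rangle_N)$ and invoke injectivity), but the argument is the same.
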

\begin{proof}
Let $\tau=N\tau_{1}+\tau_{2}$, where $0\leq \tau_1\leq n-1, 0\leq \tau_2\leq N-1$. In order to compute $H_{\bu_{i}\bu_{j}}(\tau)$, we divide the problem into two cases.

Case $i)$. $i=j,\tau_2=0$.

If $\tau_1=0$, then
\begin{eqnarray*}
H_{\bu_{i}\bu_{i}}(0)&=&\sum\limits_{t_2=0}^{N-1}n\cdot 1=nN.
\end{eqnarray*}
This is a trivial case.

If $\tau_1\neq0$, then
 \begin{eqnarray*}
H_{\bu_{i}\bu_{i}}(\tau)&=&\sum\limits_{t_2=0}^{N-1}\left(H_{\be_{\omega_{i}(t_2)},\be_{\omega_{i}(t_2)}}(\tau_1)\right)\cdot1\\
&=&\sum\limits_{t_2=0}^{N-1}\left(H_{\be_{\omega_{i}(t_2)},\be_{\omega_{i}(t_2)}}(\tau_1)\right).
\end{eqnarray*}

Therefore, for $i=j,\tau_2=0$
 \begin{eqnarray*}
H_{\bu_{i}\bu_{i}}(\tau)=\left\{ {\begin{array}{*{20}{c}}
nN, & if  \tau_1=0, \\
 \sum\limits_{t_2=0}^{N-1}\left(H_{\be_{\omega_{i}(t_2)},\be_{\omega_{i}(t_2)}}(\tau_1)\right), & otherwise.\\
\end{array}} \right.
\end{eqnarray*}

So, we have $H_{\bu_{i}\bu_{i}}(\tau)\leq NH_a(\mE)$ for $i=j,\tau_2=0,\tau_1\neq0$.

Case $ii)$. $i\neq j$ or $\tau_2\neq0$.

In this case, based on the definition of $\omega$, we have $\omega_{i}(t_2)\neq\omega_{j}(t_2+\tau_2)$ if $ h\left(x_{i}(t_2),x_{j}(t_2+\tau_2)\right)=1$. Thus, we can express the Hamming correlation of $\mS$ by using the maximum Hamming correlation $H_m(\mE)$ of sequence set $\mE$ as
 \begin{eqnarray*}
 H_{\bu_{i}\bu_{j}}(\tau)&=&\sum\limits_{t_2=0}^{N-1}\left(H_{\be_{\omega_{i}(t_2)},\be_{\omega_{j}(t_2+\tau_2)}}(\tau_1)\right)\cdot h\left(x_{i}(t_2),x_{j}(t_2+\tau_2)\right)\\
 &\leq&\sum\limits_{t_2=0}^{N-1}H_c(\mE)\cdot h\left(x_{i}(t_2),x_{j}(t_2+\tau_2)\right)\\
 &\leq&\lambda H_c(\mE).
\end{eqnarray*}

By summarizing the results of Case $i)$ and Case $ii)$, we have
 \begin{eqnarray*}
H_{\bu_{i}\bu_{j}}(\tau)\leq\left\{ {\begin{array}{*{20}{c}}
 NH_a(\mE), & i= j, \tau_2=0, \tau_1\neq0,\\
 \lambda H_c(\mE), & otherwise.\\
\end{array}} \right.
\end{eqnarray*}
\end{proof}

\begin{Lemma}\label{L2}
With the above notation, specially, the nontrivial Hamming correlation between $\bu_{i}$ and $\bu_{j}$ is
 \begin{eqnarray*}
H_{\bu_{i}\bu_{j}}(\tau)\left\{ {\begin{array}{*{20}{c}}
=0, & i= j, \tau_2=0, \tau_1\neq0,\\
\leq \lambda, & otherwise,\\
\end{array}} \right.
\end{eqnarray*}
 if the maximum Hamming autocorrelation and crosscorrelation of set $\mE$ satisfy $H_a(\mE)=0$ and $H_c(\mE)=1$.
\end{Lemma}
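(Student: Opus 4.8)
The plan is to obtain this statement as an immediate specialization of Lemma \ref{L1}, whose two-case bound already isolates exactly the two regimes appearing here. First I would recall the conclusion of Lemma \ref{L1}: in the regime $i=j$, $\tau_2=0$, $\tau_1\neq 0$ one has $H_{\bu_i\bu_j}(\tau)\leq NH_a(\mE)$, while in every other nontrivial regime one has $H_{\bu_i\bu_j}(\tau)\leq \lambda H_c(\mE)$. The entire argument then reduces to substituting the two hypotheses $H_a(\mE)=0$ and $H_c(\mE)=1$ into these bounds.

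For the first regime, substituting $H_a(\mE)=0$ gives $H_{\bu_i\bu_j}(\tau)\leq N\cdot 0=0$. To upgrade this upper bound to an equality, I would invoke the defining formula \eqref{e1}: any Hamming correlation is a sum of indicator values $h(\cdot,\cdot)\in\{0,1\}$ and is therefore non-negative. Hence $0\leq H_{\bu_i\bu_j}(\tau)\leq 0$ forces $H_{\bu_i\bu_j}(\tau)=0$, which is the claimed vanishing. For the second regime I would simply substitute $H_c(\mE)=1$ into $H_{\bu_i\bu_j}(\tau)\leq \lambda H_c(\mE)$ to obtain $H_{\bu_i\bu_j}(\tau)\leq \lambda$, matching the stated bound.

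Since every analytic ingredient has already been established in Lemma \ref{L1}, there is no genuine obstacle here; the statement is a pure corollary. The only point deserving an explicit (one-line) remark beyond mechanical substitution is the non-negativity step that converts the inequality $\leq 0$ into the exact value $0$ in the first case. I would state this explicitly so that the equality, rather than merely an upper bound, is fully justified.
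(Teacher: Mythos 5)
Your proposal is correct and matches the paper's route exactly: the paper states Lemma \ref{L2} without any proof, treating it as the immediate specialization of Lemma \ref{L1} obtained by substituting $H_a(\mE)=0$ and $H_c(\mE)=1$ into the two cases. Your extra one-line remark that non-negativity of the Hamming correlation (a sum of indicator values $h(\cdot,\cdot)\in\{0,1\}$) upgrades the bound $\leq 0$ to the stated equality $=0$ is the only step the paper leaves implicit, and you handle it correctly.
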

For short, $\mE$ is called one-coincidence sequence set if the maximum Hamming autocorrelation and crosscorrelation of set $\mE$ satisfy $H_a(\mE)=0$ and $H_c(\mE)=1$.

Under this framework, two new constructions of FHS set will be presented in the following sections, from which infinitely many new optimal FHSs or FHS sets can be obtained. By choosing the one-coincidence sequence set $\mE$ and the existing optimal FHS sets $\mX$ with different parameters, it is expected that there exist some more classes of optimal FHS sets which can be obtained from the our framework.

\section{Optimal FHS Sets of Length $nN$}
In this section, we present a new design for optimal FHS set from
any known optimal FHS set under the framework in Section 3.

Let $\mF=\{f_{1}, f_{2},\ldots, f_{v}\}$ be a frequency slot set with size $|\mF|=v$. Our procedure of the extension construction is described as follows.

\begin{Construction}\label{con1}\textbf{Construction of optimal FHS sets of length $nN$.}
\begin{enumerate}
\itemindent12pt
\item[Step {\rm1}:] Select an optimal $(N, v, \lambda;M)$ FHS set
\[
\mX=\left\{\bx_i=(x_i(0),x_i(1),\ldots,x_i(N-1)):0 \leq i\leq M-1\right\}.
\]

The maximum number of appearance of frequency slot in $\mX$ is $m(\mX)$.

\item[Step {\rm2}:]Let $q_1=p_1^{a_1}$ for a prime $p_1$ and a positive integer $a_1$. Let $g'$ be a primitive element of $\ZZ^{*}_{p_1}$, so $g=g'$ or $g=g'+p_1$ is the primitive element of $\ZZ^{*}_{q_1}$. Generate a set
\[
\mE=\{\be_j=(0\cdot g^j, 1\cdot g^j, \cdots, (q_1-1)\cdot g^j):0\leq j\leq p_1-2\}.
\]

\item[Step {\rm3}:] If $m(\mX)\leq p_1-1$, let $\omega_i(t_2)$ be a function with $\omega_i(t_2)=\varphi_{f_k}(i,t_2)$, where $\varphi_{f_k}$ is a injective function from $\Phi_k$ to $Z_{l}^*$ for every $k$ with $0\leq k \leq v$, $0\leq i\leq M-1$, $0\leq t_2\leq N-1$, the set $\Phi_k$ is defined as Lemma 2.

\item[Step {\rm4}:]We can construct the desired FHS set $\mS=\{\bs_{i}: 0\leq i\leq M-1\}$,
\[
\bs_i=I\left[\left(\be_{\omega_i(0)},x_i(0)),(\be_{\omega_i(1)},x_i(1)),\cdots,(\be_{\omega_i(N-1)},x_i(N-1)\right)\right].
\]
\end{enumerate}
\end{Construction}

\begin{Lemma}\label{L3}
The set $\mE$ in Construction {\rm\ref{con1}} has the following properties:

$(i)$ $\mE$ is a sequence set with $p_1-1$ sequences of length $q_1$ over $\ZZ_{q_1}$;

$(ii)$ $H_a(\mE)=0$, $H_c(\mE)=1$.
\end{Lemma}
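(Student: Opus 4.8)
The plan is to verify the two claims directly from the explicit description of the sequence set
\[
\mE=\{\be_j=(0\cdot g^j,\,1\cdot g^j,\,\ldots,\,(q_1-1)\cdot g^j):0\leq j\leq p_1-2\}
\]
by computing the relevant Hamming correlations as counts of solutions to simple modular equations over $\ZZ_{q_1}$. Part $(i)$ is immediate: the index $j$ ranges over $0\leq j\leq p_1-2$, giving exactly $p_1-1$ sequences, each of length $q_1$ (one entry for each coefficient $c\in\{0,1,\ldots,q_1-1\}$), and each entry $c\cdot g^j$ lies in $\ZZ_{q_1}$. So $(i)$ requires only reading off the parameters from the definition, with perhaps a remark that distinct $j$ really do give distinct sequences (since $g$ is a unit, the value at position $c=1$ is $g^j$, and the $g^j$ are distinct for $0\leq j\leq p_1-2$ because $g$ has order $q_1-1$, hence $g^{j}$ takes $p_1-1$ distinct values as $j$ runs over this range).

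For part $(ii)$ I would treat the two correlations separately. First I would compute the autocorrelation $H_{\be_j\be_j}(\tau_1)$ for $1\leq\tau_1\leq q_1-1$: by definition this counts the positions $t_1$ with $e_j(t_1)=e_j(t_1+\tau_1)$, i.e. the number of $t_1\in\ZZ_{q_1}$ satisfying $t_1\cdot g^j=(t_1+\tau_1)\cdot g^j$ in $\ZZ_{q_1}$. Since $g^j$ is a unit of $\ZZ_{q_1}$, I can cancel it to get $t_1\equiv t_1+\tau_1$, i.e. $\tau_1\equiv 0$; for $\tau_1\neq 0$ this has no solution, so $H_a(\mE)=0$. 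The key point here is the cyclic-shift convention inherited from the framework: the position index runs over $\ZZ_{q_1}$ and shifts are taken modulo $q_1$, so that the shift acts as $t_1\mapsto t_1+\tau_1$, and cancellation of the unit $g^j$ reduces each coincidence condition to a linear congruence.

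Next I would compute the crosscorrelation $H_{\be_i\be_j}(\tau_1)$ for $i\neq j$: this counts $t_1\in\ZZ_{q_1}$ with $t_1\cdot g^i=(t_1+\tau_1)\cdot g^j$, equivalently $t_1(g^i-g^j)\equiv \tau_1 g^j \pmod{q_1}$. The main obstacle, and the step I would dwell on, is showing this linear congruence has at most one solution in $t_1$ for distinct $i,j$; this is exactly where the prime power structure $q_1=p_1^{a_1}$ together with the range $0\leq i\neq j\leq p_1-2$ matters. I would argue that $g^i-g^j=g^j(g^{i-j}-1)$ and that $g^{i-j}-1$ is a unit modulo $p_1$ (hence modulo $q_1$): since $g$ reduces modulo $p_1$ to the primitive element $g'$ of $\ZZ_{p_1}^*$ and $0<|i-j|\leq p_1-2<p_1-1=\ord(g')$, the power $g^{i-j}$ is not $\equiv 1\pmod{p_1}$, so $g^{i-j}-1\not\equiv 0\pmod{p_1}$ and is therefore invertible in $\ZZ_{q_1}$. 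Consequently the coefficient $g^i-g^j$ of $t_1$ is a unit, the congruence has a unique solution $t_1$, and that single position contributes at most $1$ to the sum; hence $H_c(\mE)\leq 1$. I would then note $H_c(\mE)=1$ exactly by exhibiting that the unique solution does occur (for a suitable $\tau_1$ one gets a genuine coincidence, e.g. at $t_1=0$ when $\tau_1 g^j\equiv 0$ forces the trivial coincidence, or more carefully by choosing $\tau_1$ so the solution lands in the index range), giving the claimed $H_c(\mE)=1$. The delicate point throughout is that invertibility of $g^{i-j}-1$ rests on the index range staying strictly below the order $p_1-1$ of $g'$ modulo $p_1$, which is precisely the hypothesis $0\leq i\neq j\leq p_1-2$.
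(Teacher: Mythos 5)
Your proof is correct and follows essentially the same route as the paper: both reduce each Hamming correlation to counting solutions of a linear congruence over $\ZZ_{q_1}$, with autocorrelation vanishing after cancelling the unit $g^j$ and crosscorrelation equal to $1$ because the coefficient $g^{j_1}-g^{j_2}$ is invertible, forcing a unique coincidence position. In fact, your explicit justification of that invertibility (factoring $g^{i}-g^{j}=g^{j}(g^{i-j}-1)$ and reducing modulo $p_1$ against the order of $g'$) makes rigorous a step the paper leaves implicit when it simply writes $g^{j_1}-g^{j_2}$ as a power $g^{\theta}$ of $g$.
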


\begin{proof}
It is easily checked that the sequence set $\mE$ in Construciton 1 is a set with $p_1-1$ sequences of length $q_1$ over $\ZZ_{q_1}$. For any $0\leq j_1,j_2\leq p_1-2$, $0\leq t \leq q_1-1$, the Hamming correlation $H_{\be_{j_1}\be_{j_2}}(t)$ between $\be_{j_1}$ and $\be_{j_2}$ is given by
\begin{eqnarray*}
H_{\be_{j_1}\be_{j_2}}(t)&=&\sum\limits_{i=0}^{q_1-1}h\left(\langle ig^{j_1}\rangle_{q_1},\langle(i+t)g^{j_2}\rangle_{q_1}\right).
\end{eqnarray*}

In order to compute $H_{\be_{j_1}\be_{j_2}}(t)$, we divide the problem into two cases.

Case $i)$. $j_1=j_2$.

In this case, since $g$ is a primitive element of $\ZZ^{*}_{q_1}$, we have $ig^{j_1}\neq (i+t)g^{j_1} \mod \ {q_1}$ if $t\neq0$. Then, we can obtain that
\begin{eqnarray*}
H_{\be_{j_1}\be_{j_1}}(t)=\left\{ {\begin{array}{*{20}{c}}
q_1, & t=0,\\
0, & otherwise.\\
\end{array}} \right.
\end{eqnarray*}

Case $ii)$. $j_1\neq j_2$.

Since $g=g'$ or $g'+p_1$ is the primitive element $\in \ZZ^{*}_{q_1}$, where $g'$ is a primitive element $\in \ZZ^{*}_{p_1}$.
Hence, for any $0\leq j_1, j_2\leq p_1-2$, we let $g^\theta\equiv g^{j_1}-g^{j_2} \mod {q_1}$, where $0\leq \theta\leq {\ord(q_1)-1}$.

Then, we have
\begin{eqnarray*}
H_{\be_{j_1}\be_{j_2}}(t)&=&\sum\limits_{i=0}^{q_1-1}h\left(\langle ig^{j_1}\rangle_{q_1},\langle(i+t)g^{j_2}\rangle_{q_1}\right)\\
&=&\sum\limits_{i=0}^{q_1-1}h\left(\langle ig^{\theta}\rangle_{q_1},\langle tg^{j_2}\rangle_{q_1}\right).
\end{eqnarray*}

Thus, we can obtain
\begin{eqnarray*}
H_{\be_{j_1}\be_{j_2}}(t)=1.
\end{eqnarray*}
since $h\left(\langle ig^{\theta}\rangle_{q_1},\langle tg^{j_2}\rangle_{q_1}\right)=1$ only if $i\equiv tg^{(j_2-\theta)} \mod {q_1}$.

By summarizing the results of  Case $i)$ and $ii)$, we have the nontrivial Hamming correlation of $\mE$ is
\begin{eqnarray*}
H_{\be_{j_1}\be_{j_2}}(t)=\left\{ {\begin{array}{*{20}{c}}
0, & j_1=j_2,t\neq0,\\
1, & j_1\neq j_2.\\
\end{array}} \right.
\end{eqnarray*}

\end{proof}

\begin{Theorem}\label{t1}
The FHS set $\mS$ constructed by Construction {\rm\ref{con1}} is an optimal
$(q_1N,q_1v, \lambda; M)$ FHS set over $\ZZ_{q_1}\times\mF$ if $\left\lceil {\frac{{(NM - v)}}{{(NM - 1)}}\frac{N}{v}}\right\rceil=\left\lceil\frac{(q_1NM-v)}{(q_1NM-1)}\frac{N}{v}\right\rceil$.
\end{Theorem}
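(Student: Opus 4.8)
\emph{The plan is to} read the parameters of $\mS$ off the construction, bound its maximum Hamming correlation from above by $\lambda$ using the framework lemmas, and then pinch the Peng-Fan bound with a sandwich inequality whose collapse is guaranteed by the numerical hypothesis. \emph{First I would} observe that $\mS$ is exactly the extended set $\mU$ of Section~3 built from the inner set $\mE$ with extension factor $n=q_1$; reading each $q_1\times N$ array in (\ref{e4}) row by row gives $M$ sequences of length $q_1N$, every entry being a pair in $\ZZ_{q_1}\times\mF$. To see that the alphabet has size exactly $q_1v$, fix a slot $f_k\in\mF$ occurring in $\mX$ and note that, as $t_1$ runs over $\{0,\dots,q_1-1\}$, the first coordinate $t_1g^{\,j}$ runs over all of $\ZZ_{q_1}$, since multiplication by the unit $g^{\,j}$ permutes $\ZZ_{q_1}$. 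Hence every pair in $\ZZ_{q_1}\times\mF$ appears, and $\mS$ has parameters $(q_1N,\,q_1v,\,\cdot\,;M)$.

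\emph{Next I would} bound $H_m(\mS)$ from above. By Lemma~\ref{L3} the set $\mE$ is one-coincidence, i.e. $H_a(\mE)=0$ and $H_c(\mE)=1$, and the Step-3 hypothesis $m(\mX)\le p_1-1$ is exactly what makes the injective labels $\varphi_{f_k}$ available. Consequently Lemma~\ref{L2} applies and gives $H_{\bs_i\bs_j}(\tau)=0$ when $i=j,\ \tau_2=0,\ \tau_1\neq0$, and $H_{\bs_i\bs_j}(\tau)\le\lambda$ in every other nontrivial case. Taking the maximum over all nontrivial shifts and all pairs $(i,j)$ yields $H_m(\mS)\le\lambda$.

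For optimality I would invoke the Peng-Fan bound (Lemma~\ref{PF-bound}) for the extended parameters,
\[
H_m(\mS)\ \ge\ \left\lceil\frac{(q_1NM-q_1v)\,q_1N}{(q_1NM-1)\,q_1v}\right\rceil=:B(\mS),
\]
together with the optimality of $\mX$, which says
\[
\lambda=H_m(\mX)=\left\lceil\frac{(NM-v)\,N}{(NM-1)\,v}\right\rceil=:B(\mX).
\]
The hypothesis of the theorem is precisely the equality $B(\mX)=B(\mS)$, so combining the three facts gives
\[
B(\mS)\ \le\ H_m(\mS)\ \le\ \lambda=B(\mX)=B(\mS),
\]
which forces $H_m(\mS)=B(\mS)=\lambda$. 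Hence $\mS$ meets the Peng-Fan bound with equality and has maximum Hamming correlation exactly $\lambda$, so it is an optimal $(q_1N,q_1v,\lambda;M)$ FHS set.

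\emph{The delicate point} is the arithmetic that identifies the right-hand ceiling in the hypothesis with the genuine Peng-Fan bound $B(\mS)$ of the extended set: after canceling the common factor $q_1$ from the numerator and denominator of $B(\mS)$ one must verify that the stated condition really is $B(\mX)=B(\mS)$, keeping careful track of the factor $q_1$ multiplying $v$ in the numerator $q_1NM-q_1v$. All the correlation content is already delivered by Lemmas~\ref{L3} and~\ref{L2}; the real substance of the theorem is simply the observation that this numerical hypothesis is exactly the condition under which the sandwich collapses and the upper bound $H_m(\mS)\le\lambda$ becomes tight.
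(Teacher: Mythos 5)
You follow the same route as the paper: read off the parameters $(q_1N,q_1v;M)$, control the correlation via Lemmas \ref{L3} and \ref{L2}, and pinch between the Peng--Fan bound for the extended parameters and the optimality of $\mX$. Everything up to the last step is sound (your one-sided bound $H_m(\mS)\le\lambda$ suffices for the sandwich, even though the paper asserts equality, which indeed holds by taking $\tau_1=0$). The gap is precisely the step you yourself label ``the delicate point'' and then never carry out: the claim that the hypothesis of Theorem \ref{t1} \emph{is} the equality $B(\mX)=B(\mS)$. It is not. Your correctly applied Peng--Fan bound gives
\[
B(\mS)=\left\lceil\frac{(q_1NM-q_1v)\,q_1N}{(q_1NM-1)\,q_1v}\right\rceil
=\left\lceil\frac{q_1(NM-v)}{q_1NM-1}\cdot\frac{N}{v}\right\rceil,
\]
whereas the hypothesis of the theorem involves $\left\lceil\frac{q_1NM-v}{q_1NM-1}\cdot\frac{N}{v}\right\rceil$; the numerators differ by $(q_1-1)v$. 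Worse, the discrepancy goes the wrong way: for $q_1\ge 2$, $v\ge 1$ and $NM>v$ one has
\[
\frac{q_1(NM-v)}{q_1NM-1}\;<\;\frac{NM-v}{NM-1}\;\le\;\frac{q_1NM-v}{q_1NM-1},
\]
(the first inequality since $q_1(NM-1)<q_1NM-1$, the second since $x\mapsto (x-v)/(x-1)$ is nondecreasing for $v\ge1$). So the stated hypothesis equates $\lambda$ with the ceiling of the \emph{largest} of these three fractions, while optimality of $\mS$ requires $\lambda$ to equal the ceiling of the \emph{smallest}. The hypothesis therefore does not force $B(\mS)=\lambda$, your sandwich does not collapse, and in any parameter range where $\lceil B(\mS)\rceil<\lambda$ the conclusion would actually fail for $\mS$.

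It is worth saying why your correct bookkeeping clashes with the statement: the paper's own proof misapplies Lemma \ref{PF-bound}, substituting the new length $q_1N$ but keeping the old alphabet size $v$ in the numerator, i.e.\ writing $H'_m\ge\left\lceil\frac{(Mq_1N-v)}{(Mq_1N-1)}\frac{q_1N}{q_1v}\right\rceil$ in place of $\left\lceil\frac{(Mq_1N-q_1v)}{(Mq_1N-1)}\frac{q_1N}{q_1v}\right\rceil$, and the theorem's hypothesis is tailored to that erroneous expression. Your argument is the right proof of a corrected statement: replace the hypothesis by
\[
\left\lceil\frac{q_1(NM-v)}{q_1NM-1}\cdot\frac{N}{v}\right\rceil=\left\lceil\frac{(NM-v)}{(NM-1)}\cdot\frac{N}{v}\right\rceil,
\]
and your sandwich closes verbatim. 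As a proof of the theorem as stated, however, the final identification is a genuine gap --- one you would have detected by performing the verification you explicitly deferred.
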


\begin{proof}
It is easily checked that that the extended FHS set $\mS$ in Construciton 1 is a set with $M$ sequences of length $q_1N$ over $\ZZ_{q_1}\times\mF$.

From Lemma {\rm\ref{L1}},we know that the Hamming correlation of FHS set $\mS$ is based on the correlation of set $\mE$. Thus, from Lemma {\rm\ref{L3}}, we can obtain that $H_m(\mS)=H_m(\mX)=\lambda$.

Then, we verify the optimality of the extended FHS set $\mS$. Since the optimal FHS set $\mX$ satisfies Peng-Fan bound \cite{PF}, we have
$\lambda= \left\lceil {\frac{{(NM - v)}}{{(NM - 1)}}\frac{N}{v}}
\right\rceil$.

According to the Peng-Fan bound \cite{PF}, the FHS set $\mS$ with the sequence length $q_1N$,
the family size $M$ and the frequency slot set of size $q_1v$ over $\ZZ_{q_1}\times\mF$, the optimal maximum Hamming correlation $H'_m$ of $\mS$ should be
\begin{eqnarray*}
{H'_m}\ge \left\lceil\frac{(Mq_1N-v)}{(Mq_1N-1)}\frac{q_1N}{q_1v}\right\rceil=\left\lceil\frac{(Mq_1N-v)}{(Mq_1N-1)}\frac{N}{v}\right\rceil.
\end{eqnarray*}

Thus, we have the FHS set $\mS$ in Construction 1 is optimal if $\left\lceil\frac{(q_1NM-v)}{(q_1NM-1)}\frac{N}{v}\right\rceil= \left\lceil {\frac{{(NM - v)}}{{(NM - 1)}}\frac{N}{v}}\right\rceil$. This constraint is easy to satisfy because $q_1NM>NM$.

\end{proof}

 The above construction remove the constraint requiring that the expansion factor is coprime with the length of original FHSs. Moreover, it is possible to extend the length of $\mS$ by the extension factor
$q_2 = p^{a_2}_2$, where $a_2$ is a positive integer and $p_2$ is a prime with
$p_2 > p_1$. In this way, Construction 1 can be applied recursively infinitely many times. Since Construction 1 can be applied to any existing optimal FHSs or FHS sets, it is expected that there exist some more classes of optimal FHS sets which can be obtained from our construction.

\begin{Corollary}\label{c1}
For positive integers $r$ and $a_1, \ldots, a_r$,
let $n = q_1\cdots q_r$, where $q_i = p^{a_i}_i , 1\leq i \leq r$, $p_1, \ldots, p_r$ are primes with
$p_1 <\ldots< p_r$. If  $m(\mX)\leq p_1-1$, then there exists an optimal
$(nN,nv,\lambda; M)$ FHS set over $\ZZ_{q_r}\times\cdots\times \ZZ_{q_1} \times \mF$.
\end{Corollary}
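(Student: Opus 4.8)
The plan is to prove the statement by induction on $r$, reading it as an $r$-fold iteration of Construction~\ref{con1} (equivalently, of Theorem~\ref{t1}). The base case $r=1$ is exactly Theorem~\ref{t1} applied with extension factor $q_1$: since $m(\mX)\le p_1-1$, Step~3 of Construction~\ref{con1} admits the required injective maps $\varphi_{f_k}$, and Theorem~\ref{t1} produces an optimal $(q_1N,q_1v,\lambda;M)$ FHS set $\mS_1$ over $\ZZ_{q_1}\times\mF$. For the inductive step I would assume that after $i-1$ applications we have an optimal $(N_{i-1},v_{i-1},\lambda;M)$ FHS set $\mS_{i-1}$ over $\ZZ_{q_{i-1}}\times\cdots\times\ZZ_{q_1}\times\mF$, with $N_{i-1}=q_1\cdots q_{i-1}N$ and $v_{i-1}=q_1\cdots q_{i-1}v$, and then feed $\mS_{i-1}$ back into Construction~\ref{con1} with the next factor $q_i=p_i^{a_i}$. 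Since Construction~\ref{con1} is generic in its optimal input set and prime-power extension factor, each step appends a coordinate $\ZZ_{q_i}$ to the alphabet, so after $r$ steps the parameters are $(q_1\cdots q_rN,\,q_1\cdots q_rv,\,\lambda;M)=(nN,nv,\lambda;M)$ over $\ZZ_{q_r}\times\cdots\times\ZZ_{q_1}\times\mF$, as claimed.

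Two things must be checked before each re-application with $q_i$: the multiplicity bound $m(\mS_{i-1})\le p_i-1$ (so the injective $\varphi_{f_k}$ exist) and the Peng--Fan ceiling condition of Theorem~\ref{t1} (so the output is optimal). The crux is the first, and I would isolate it as a \emph{multiplicity-preservation lemma}: the extension satisfies $m(\mS_i)=m(\mX)$ at every stage. To prove it, fix an output slot $(b,f_k)$ and count the (sequence index, position) pairs realizing it. An entry of $\bs_\ell$ equals $(b,f_k)$ precisely when $x_\ell(t_2)=f_k$ and $e_{\omega_\ell(t_2)}(t_1)=b$, i.e. $t_1\,g^{\omega_\ell(t_2)}\equiv b\pmod{q_i}$. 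Because $g$ is a primitive element of $\ZZ_{q_i}^{*}$, the multiplier $g^{\omega_\ell(t_2)}$ is a unit, so $t_1\mapsto\langle t_1 g^{\omega_\ell(t_2)}\rangle_{q_i}$ is a bijection of $\ZZ_{q_i}$; hence for each pair $(\ell,t_2)\in\Phi_k$ there is exactly one row $t_1$ giving $b$. Distinct pairs give distinct occurrences, so $(b,f_k)$ appears exactly $|\Phi_k|$ times, independently of $b$ and of the chosen labelling $\varphi_{f_k}$. Therefore $m(\mS_i)=\max_k|\Phi_k|=m(\mX)$.

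With this lemma the hypothesis propagates: at every stage $m(\mS_{i-1})=m(\mX)\le p_1-1\le p_i-1$, the last inequality being exactly where the ordering $p_1<\cdots<p_r$ is used. Thus the multiplicity condition never fails, however many factors are appended. For optimality I would note that Lemma~\ref{L1} together with Lemma~\ref{L3} keeps the maximum Hamming correlation equal to $\lambda$ at each step (as $\mE$ is a one-coincidence set), so it suffices that the Peng--Fan bound for the enlarged parameters still equals $\lambda$. Because the quotient $N_{i-1}/v_{i-1}=N/v$ is constant along the recursion and each enlargement only increases the product $N_{i-1}M$, the per-step ceiling conditions of Theorem~\ref{t1} are of the easily-met type discussed in the remark following it; they hold simultaneously exactly when the bound computed for the final parameters $(nN,nv)$ still rounds up to $\lambda$, which is the natural reading of the word ``optimal'' in the statement.

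The main obstacle is the multiplicity-preservation lemma, and in particular recognizing that it holds \emph{with equality} and uniformly in the number of iterations. Once one sees that appending a factor $q_i$ neither increases the maximum number of appearances of a frequency slot nor depends on the injective labelling, the strict chain $p_1<\cdots<p_r$ does all the remaining work, and the induction runs without any further constraint beyond those already inherited from Theorem~\ref{t1}. The only routine bookkeeping left is the composition of the product alphabets and the ceiling-stability check, neither of which presents a genuine difficulty.
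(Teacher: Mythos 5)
Your proof is correct and follows essentially the same route as the paper, which obtains the corollary simply by applying Construction~1 (Theorem~\ref{t1}) recursively with the factors $q_1,\dots,q_r$. You additionally make explicit the two points the paper leaves implicit, and both are handled correctly: the multiplicity-preservation lemma $m(\mS_i)=m(\mX)$ (which is what legitimizes each re-application, via $m(\mX)\le p_1-1\le p_i-1$), and the observation that, by monotonicity of the Peng--Fan ratio in the extension factor, the per-step ceiling conditions of Theorem~\ref{t1} hold simultaneously exactly when the ceiling for the final parameters $(nN,nv)$ still equals $\lambda$.
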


Construction 1 in the case of $M= 1$ leads to the construction of a new single FHS. Based on Construction 1, it is possible to obtain new optimal FHSs with respect to the Lempel-Greenberger bound.

\begin{Corollary}\label{c2}
Assume that there exists an optimal
$(N, v, \lambda)$-FHS $\bs$ with respect to the Lempel-Greenberger bound, defined over $\mF$. For positive integers $r$ and $a_1, \ldots, a_r$,
let $n = q_1\cdots q_r$, where $q_i = p^{a_i}_i , 1\leq i \leq r$, $p_1, \ldots, p_r$ are primes with
$p_1 <\ldots< p_r$. If $m(\bs) \leq p_1-1$, then there exists an optimal
$(nN,nv,\lambda)$ FHS over $\ZZ_{q_r}\times\cdots\times \ZZ_{q_1} \times \mF$.
\end{Corollary}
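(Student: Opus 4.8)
The plan is to treat this as the $M=1$ specialization of Construction~\ref{con1} and Corollary~\ref{c1}, with the Peng--Fan optimality step of Theorem~\ref{t1} replaced by the corresponding Lempel--Greenberger computation. First I would apply Construction~\ref{con1} to the single FHS $\bs$ with $M=1$, and then iterate it $r$ times as in Corollary~\ref{c1}, using at the $i$-th stage the prime power $q_i$ together with the set $\mE$ of Lemma~\ref{L3}. The hypothesis $m(\bs)\le p_1-1$ supplies the injective labelling $\omega$ needed at the first stage; since each extension leaves the maximum appearance number unchanged (one occurrence of $f_k$ contributes, within its own column of \eqref{e4}, exactly one copy of each pair $(c,f_k)$, $c\in\ZZ_{q_1}$) and $m(\bs)\le p_1-1\le p_i-1$ for all $i$ because $p_1<\cdots<p_r$, the labelling stays available at every later stage. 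This produces a single sequence $\bu$ of length $nN$ over $\ZZ_{q_r}\times\cdots\times\ZZ_{q_1}\times\mF$; the same column observation shows every symbol of the product alphabet is used, so its size is genuinely $nv$.

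Next I would compute $H_a(\bu)$. With $M=1$ there is no cross-correlation to control, so only autocorrelations at shifts $\tau=N\tau_1+\tau_2$ with $(\tau_1,\tau_2)\ne(0,0)$ matter. Applying Lemma~\ref{L2} (whose hypotheses $H_a(\mE)=0$ and $H_c(\mE)=1$ are exactly Lemma~\ref{L3}), iterated through the $r$ stages, gives $H_{\bu\bu}(\tau)=0$ when $\tau_2=0,\ \tau_1\ne0$ and $H_{\bu\bu}(\tau)\le\lambda$ otherwise, so $H_a(\bu)\le\lambda$. Equality holds: taking $\tau_1=0$ and $\tau_2=\tau^{\ast}$ a shift at which $\bs$ realizes its maximal autocorrelation $\lambda$, formula \eqref{e6} reduces each of the $\lambda$ coincidences of $\bs$ to a single coincidence of $\mE$ (injectivity of $\omega$ forces distinct $\mE$-indices, and $H_c(\mE)=1$ contributes exactly $1$ at shift $0$), whence $H_{\bu\bu}(\tau^{\ast})=\lambda$. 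Thus $H_a(\bu)=\lambda$ exactly.

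It remains to verify optimality, and this is where the argument genuinely diverges from Theorem~\ref{t1}, since a single FHS is optimal against the Lempel--Greenberger bound rather than Peng--Fan. Writing $B(L,w)$ for the Lempel--Greenberger expression with residue $\epsilon=\langle L\rangle_{w}$, optimality of $\bs$ means $\lambda=\lceil B(N,v)\rceil$, and I must show $\lceil B(nN,nv)\rceil=\lambda$. The key observation is the residue scaling $\langle nN\rangle_{nv}=n\langle N\rangle_{v}=n\epsilon$, which follows from $nN=nv\lfloor N/v\rfloor+n\epsilon$ with $0\le n\epsilon<nv$. Substituting, the numerator of $B$ scales by $n^2$ while the factor $w(L-1)$ becomes $nv(nN-1)$, yielding the homogeneity relation $B(nN,nv)=\dfrac{n(N-1)}{nN-1}\,B(N,v)$.

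The main obstacle is the ceiling step. Because $\frac{n(N-1)}{nN-1}<1$ for $n>1$, we get $B(nN,nv)<B(N,v)\le\lambda$, so $\lceil B(nN,nv)\rceil\le\lambda$ is automatic; and since $B$ is a lower bound while $H_a(\bu)=\lambda$, optimality of $\bu$ is equivalent to the single inequality $B(nN,nv)>\lambda-1$, i.e. to $\lceil B(nN,nv)\rceil=\lceil B(N,v)\rceil$. I would settle this exactly as the remark inside the proof of Theorem~\ref{t1} disposes of its Peng--Fan counterpart: the multiplier $\frac{n(N-1)}{nN-1}$ tends to $1$ and, in the ranges where the original bound is met, stays close enough to $1$ that the value is never pushed down across the integer $\lambda-1$. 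It is cleanest to argue one prime power at a time, replacing $(N,v)$ successively by $(q_1N,q_1v),(q_2q_1N,q_2q_1v),\dots$, and to invoke this single-factor estimate at each stage, so that the recursion of Corollary~\ref{c1} carries optimality through all $r$ extensions. Establishing that this ceiling identity holds for every admissible $(N,v,\lambda)$ is the one place where a short but careful elementary estimate is needed.
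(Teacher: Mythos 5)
Your construction and correlation analysis are correct and follow the same route the paper intends: the paper states Corollary~\ref{c2} without any separate proof, as the $M=1$ specialization of Construction~\ref{con1} iterated as in Corollary~\ref{c1}, and your two supporting observations (that the extension preserves the appearance count $m$ because each $\be_j$ is a permutation of $\ZZ_{q_1}$, and that $H_a(\bu)=\lambda$ via Lemmas~\ref{L2} and~\ref{L3}) are accurate --- indeed more careful than anything written in the paper. The genuine gap is the step you defer at the end: proving $\lceil B(nN,nv)\rceil=\lceil B(N,v)\rceil$. The heuristic you offer for it starts from a false premise: the multiplier $\frac{n(N-1)}{nN-1}$ does \emph{not} tend to $1$ as the extension factor $n$ grows; for fixed $N$ it decreases to $\frac{N-1}{N}$, and the corollary permits arbitrarily large $n$. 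Moreover, no ``closeness to $1$'' argument can work on its own: for large $n$ the loss $\bigl(1-\frac{n(N-1)}{nN-1}\bigr)B(N,v)$ is of order $\lambda/N\approx 1/v$, whereas an arbitrary rational lying in $(\lambda-1,\lambda]$ with denominator $v(N-1)$ may exceed $\lambda-1$ by as little as $\frac{1}{v(N-1)}$, which is far smaller. So the ceiling identity genuinely requires the specific shape of the Lempel--Greenberger expression, and your proof is incomplete exactly at the point where the content of the statement lies (it is the reason Corollary~\ref{c2}, unlike Theorem~\ref{t1}, carries no ceiling hypothesis).

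The missing step has a short unconditional proof via a closed form of the bound, rather than via your homogeneity relation. Write $N=kv+\epsilon$ with $\epsilon=\langle N\rangle_v$, and assume $N>v$ (if $N\le v$ then $\lambda=0$ and everything is trivial). If $\epsilon>0$, then
\begin{equation*}
B(N,v)=\frac{k(N+\epsilon-v)}{N-1},\qquad
B(N,v)-(k-1)=\frac{(k+1)\epsilon+k-1}{N-1}>0,\qquad
k-B(N,v)=\frac{k(v-1-\epsilon)}{N-1}\ge 0,
\end{equation*}
so $\lceil B(N,v)\rceil=k$; if $\epsilon=0$, then $B(N,v)=(k-1)+\frac{k-1}{N-1}\in(k-1,k]$, and again $\lceil B(N,v)\rceil=k$. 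Thus the Lempel--Greenberger bound is exactly $\lfloor N/v\rfloor$. Now apply the same computation to the pair $(nN,nv)$: its quotient is still $k$ and its residue is $n\epsilon$, which vanishes precisely when $\epsilon$ does, so $\lceil B(nN,nv)\rceil=k=\lambda$ for every $n$, with no condition on $n$ whatsoever. Combined with your (correct) bound $H_a(\bu)\le\lambda$, the Lempel--Greenberger bound applied to $\bu$ forces $H_a(\bu)=\lambda$, and optimality follows. This substitution also makes your separate exact-equality argument for $H_a(\bu)$, and the homogeneity relation itself, unnecessary.
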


\begin{Remark}
By means of our method, Chung's Construction A in \cite{Chung-14} is a special case of Corollary {\rm\ref{c1}} for $\gcd(n,N)=1$ and $\omega_i(t_2)=\sum^{i-1}_{\alpha=0}\sum^{N-1}_{t=0}h(x_\alpha(t),x_i(t_2))+\sum^{t_2-1}_{\beta=0}h(x_i(\beta),x_i(t_2))-1$ for any $0\leq i\leq M-1, 0\leq t_2\leq N-1$.
\end{Remark}

\begin{Example}
\itemindent12pt
\item[Step {\rm1}:]
We select an optimal $(26,7,4;3)$ FHS set $\mX=\{\bx_{0}, \bx_{1},
\bx_{2}\}$ over $\ZZ_7$ , such that
\begin{eqnarray*}
\bx_{0}=\{6,1,5,3,4,4,1,2,3,5,2,0,0,6,0,0,2,5,3,2,1,4,4,3,5,1\}, \\
\bx_{1}=\{6,5,3,1,2,2,5,0,1,3,0,4,4,6,4,4,0,3,1,0,5,2,2,1,3,5\},\\
\bx_{2}=\{6,3,1,5,0,0,3,4,5,1,4,2,2,6,2,2,4,1,5,4,3,0,0,5,1,3\}.
\end{eqnarray*}
So, we have $m(\mX)=12$.

\item[Step {\rm2}:]
Select $q_1=p_1=13>m(\mX)=12$. Choose $g=2$ is the primitive element $\in \ZZ^*_{13}$. Generate a
set $\mE=\{\be_j=(0\cdot g^j,1\cdot g^j,\cdots,12\cdot g^j):0\leq j\leq 11\}$. The set $\mE$ could be written as
\begin{eqnarray*}
\be_{0}&=&\{0,1,2,3,4,5,6,7,8,9,10,11,12\}, \\
\be_{1}&=&\{0,2,4,6,8,10,12,1,3,5,7,9,11\},\\
\vdots\\
\be_{11}&=&\{0,7,1,8,2,9,3,10,4,11,5,12,6\}.
\end{eqnarray*}

\item[Step {\rm3}:] For any $0\leq i\leq2, 0\leq t_2\leq 25$, a special expression of $\omega=\{\omega_0, \omega_{1},\omega_{2}\}$ can be given by
\begin{eqnarray*}
\omega_{0}&=&\{0,0,0,0,0,1,1,0,1,1,1,0,1,1,2,3,2,2,2,3,2,2,3,3,3,3\}, \\
\omega_{1}&=&\{2,4,4,4,4,5,5,4,5,5,5,4,5,3,6,7,6,6,6,7,6,6,7,7,7,7\},\\
\omega_{2}&=&\{4,8,8,8,8,9,9,8,9,9,9,8,9,5,10,11,10,10,10,11,10,10,11,11,11,11\}.
\end{eqnarray*}

\item[Step {\rm4}:]
We can construct the FHS set $\mS=\{\bs_{0}, \bs_{1}, \bs_{2}\}$ by the Construction 1.
\begin{eqnarray*}
\bs_{0}=\{\!\!&\!&\!(0,6),(0,1),(0,5),(0,3),(0,4),(0,4),(0,1),(0,2),(0,3),(0,5), \ldots,\\
            \!&\!&\!(9,2),(9,5),(9,3),(5,2),(9,1),(9,4),(5,4),(5,3),(5,5),(5,1)\}, \\
\bs_{1}=\{\!\!&\!&\!(0,6),(0,5),(0,3),(0,1),(0,2),(0,2),(0,5),(0,0),(0,1),(0,3), \ldots,\\
           \! &\!&\!(1,0),(1,3),(1,1),(2,0),(1,5),(1,2),(2,2),(2,1),(2,3),(2,5)\},\\
\bs_{2}=\{\!\!&\!&\!(0,6),(0,3),(0,1),(0,5),(0,0),(0,0),(0,3),(0,4),(0,5),(0,1), \ldots,\\
            \!&\!&\!(3,4),(3,1),(3,5),(6,4),(3,3),(3,0),(6,0),(6,5),(6,1),(6,3)\}.
\end{eqnarray*}

The FHS $\bs_{0}$ over $\ZZ_{13}\times\ZZ_7$ can be obtained from the $13\times26$ array in (6), that is
\begin{eqnarray}
\ \ \ \ \   \small{\left(\begin{array}{*{70}{c}}
(0,6)&(0,1)&(0,5)&(0,3)&(0,4)&\cdots&&(0,4)&(0,3)&(0,5)&(0,1)\\
(1,6)&(1,1)&(1,5)&(1,3)&(1,4)&\cdots&&(8,4)&(8,3)&(8,5)&(8,1)\\
(2,6)&(2,1)&(2,5)&(2,3)&(2,4)&\cdots&&(3,4)&(3,3)&(3,5)&(3,1)\\
(3,6)&(3,1)&(3,5)&(3,3)&(3,4)&\cdots&&(11,4)&(11,3)&( 11,5)&(11,1)\\
(4,6)&(4,1)&(4,5)&(4,3)&(4,4)&\cdots&&(6,4)&(6,3)&(6,5)&(6,1)\\
(5,6)&(5,1)&(5,5)&(5,3)&(5,4)&\cdots&&(1,4)&(1,3)&(1,5)&( 1,1)\\
(6,6)&(6,1)&(6,5)&(6,3)&(6,4)&\cdots&&(9,4)&(9,3)&( 9,5)&(9,1)\\
(7,6)&(7,1)&(7,5)&(7,3)&(7,4)&\cdots&&(4,4)&(4,3)&(4,5)&(4,1)\\
(8,6)&(8,1)&(8,5)&(8,3)&(8,4)&\cdots&&(12,4)&(12,3)&(12,5)&( 12,1)\\
(9,6)&(9,1)&(9,5)&(9,3)&(9,4)&\cdots&&(7,4)&(7,3)&(7,5)&( 7,1)\\
(10,6)&(10,1)&(10,5)&(10,3)&(10,4)&\cdots&&(2,4)&(2,3)&(2,5)&( 2,1)\\
(11,6)&(11,1)&(11,5)&(11,3)&(11,4)&\cdots&&(10,4)&(10,3)&( 10,5)&(10,1)\\
(12,6)&(12,1)&(12,5)&(12,3)&(12,4)&\cdots&&(5,4)&(5,3)&( 5,5)&(5,1)
\end{array}\right)}. \normalsize
\end{eqnarray}

Then, the maximum Hamming correlation of $\mS$ can be seen in Figure 1.
\begin{figure}\centering
\includegraphics[width=0.65\textwidth,bb=160 650 460 790]{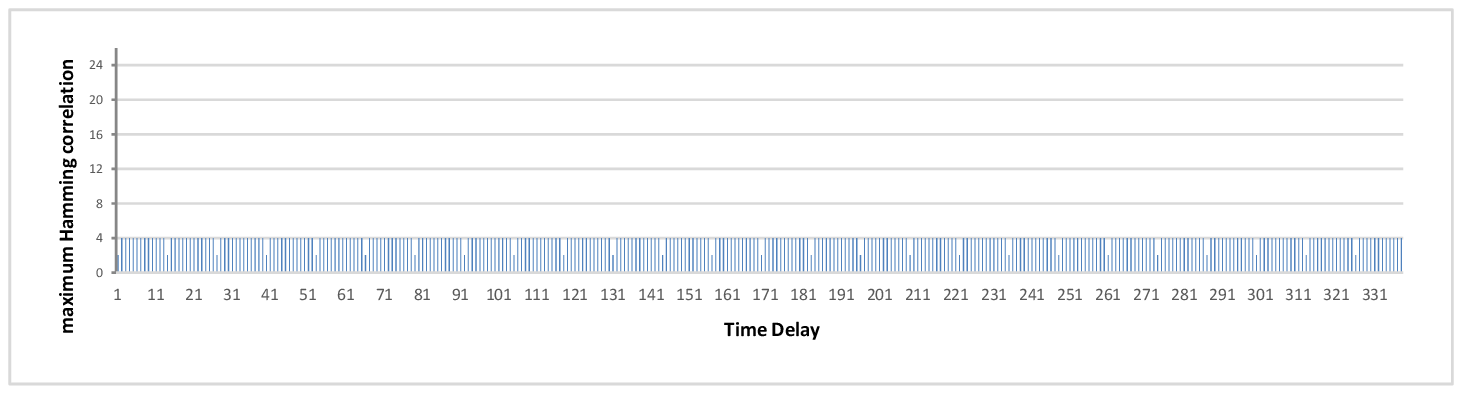}
\caption{The maximum Hamming correlations of $\mS$ in Example 1}
\end{figure}

The maximum Hamming correlation of FHS set $\mS$ in Example 1 is $\lambda=4$. Thus, it can be verified that $S$ is an optimal $(338, 91, 4; 3)$  FHS
set. In the similar way, $\mX$ can be extended to an optimal FHS set with other parameters by choosing different $n = q_1\cdots q_r$, and the extension factor $n$ should not co-prime to the length $N=26$ of $\mX$.
\end{Example}

\section{Optimal FHS Sets of Length $dN$}
In this section, we present a new design for optimal FHS set with new parameters under the framework in Section 3. Our procedure of the extension construction is described as follows.

\begin{Construction}\label{con2}\textbf{Construction of optimal FHS sets of length $dN$.}
\begin{enumerate}
\itemindent12pt
\item[Step {\rm1}:] Select an optimal $(N, v, \lambda;M)$ FHS set
\[
\mX=\left\{\bx_i=(x_i(0),x_i(1),\ldots,x_i(N-1)):0 \leq i \leq M-1\right\}.
\]
The maximum number of appearance of frequency slot in $\mX$ is $m(\mX)$.

\item[Step {\rm2}:] Let $q=p^{a}$ for a prime $p$ and a positive integer $a$.  Let $g'$ be a primitive element of $\in \ZZ^{*}_{p}$, so $g=g'$ or $g=g'+p$ is the primitive element of $\ZZ^{*}_{q}$, $d=\ord(g)=(p-1)p^{(a-1)}$. Generate a set
\[
\mE=\{\be_j=(g^0+j, g^1+j, \cdots, g^{d-1}+j):0\leq j\leq p-1\}.
\]

\item[Step {\rm3}:] If $m(\mX)\leq p$, $\omega_i(t_2)$ be a function with $\omega_i(t_2)=\varphi_{f_k}(i,t_2)$, where $\varphi_{f_k}$ is a injective function from $\Phi_k$ to $Z_{l}^*$ for every $k$ with $0\leq k \leq v$, $0\leq i\leq M-1$, $0\leq t_2\leq N-1$, the set $\Phi_k$ is defined as Lemma 2.

\item[Step {\rm4}:]We can construct the desired FHS set $\mS=\{\bs^{i}: 0\leq i\leq M-1\}$,
\[
\bs^i=I[(\be_{\omega_i(0)},a_i(0)),(\be_{\omega_i(1)},a_i(1)),\cdots,(\be_{\omega_i(N-1)},a_i(N-1))].
\]
\end{enumerate}
\end{Construction}

\begin{Lemma}\label{L4}
The set $\mE$ in Construction {\rm\ref{con2}} has the following properties:
\\$(i)$ $\mE$ is a sequence set with $p$ sequences of length $d=(p-1)p^{(a-1)}$ over $\ZZ_{q}$;
\\$(ii)$ $H_a(\mE)=0$, $H_c(\mE)=1$;
\end{Lemma}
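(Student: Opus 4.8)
The plan is to verify the two claimed properties of the set $\mE=\{\be_j=(g^0+j,g^1+j,\ldots,g^{d-1}+j):0\leq j\leq p-1\}$ exactly as was done for the multiplicative analogue in Lemma \ref{L3}, but now with an \emph{additive} shift by $j$ instead of a multiplicative scaling by $g^j$. First I would establish part $(i)$: there are $p$ choices of the index $j$ (namely $0\leq j\leq p-1$), each $\be_j$ has length $d=\ord(g)=(p-1)p^{a-1}$ by construction, and since every entry $g^t+j$ is reduced modulo $q$, the sequences lie over $\ZZ_q$. This is immediate from the definition and requires no real argument.

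For part $(ii)$ I would compute the Hamming correlation
\[
H_{\be_{j_1}\be_{j_2}}(\tau)=\sum_{t=0}^{d-1}h\bigl(\langle g^t+j_1\rangle_q,\langle g^{t+\tau}+j_2\rangle_q\bigr),
\]
and split into the autocorrelation case $j_1=j_2$ and the crosscorrelation case $j_1\neq j_2$. In the autocorrelation case with $\tau\neq 0$, a match at position $t$ forces $g^t\equiv g^{t+\tau}\pmod q$, i.e.\ $g^\tau\equiv 1\pmod q$; since $g$ has multiplicative order $d=\ord(g)$ and $0<\tau<d$, this never happens, so $H_a(\mE)=0$. For the crosscorrelation case $j_1\neq j_2$, a match at shift $\tau$ requires $g^t+j_1\equiv g^{t+\tau}+j_2\pmod q$, that is $g^t(g^\tau-1)\equiv j_2-j_1\pmod q$. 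The key point is that $j_2-j_1$ is a nonzero element of $\ZZ_q$ whose value lies in the prime subfield offset range, and $g^\tau-1$ is a fixed element; provided $g^\tau-1$ is a unit in $\ZZ_q$, the equation $g^t\equiv (j_2-j_1)(g^\tau-1)^{-1}\pmod q$ determines $g^t$ uniquely, hence fixes $t$ at most once among $0\leq t\leq d-1$ because $g^0,\ldots,g^{d-1}$ are the $d$ distinct powers of $g$. This yields $H_{\be_{j_1}\be_{j_2}}(\tau)\leq 1$, and one then exhibits a shift achieving equality to conclude $H_c(\mE)=1$.

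The main obstacle, and the step deserving the most care, is the crosscorrelation count: one must justify that for each relevant pair $(j_1,j_2)$ and shift $\tau$ the value $(j_2-j_1)(g^\tau-1)^{-1}$ actually lands in the cyclic set $\{g^0,\ldots,g^{d-1}\}$ of powers of $g$ (not merely in $\ZZ_q^*$ or $\ZZ_q$), since a match requires $g^t$ to equal that target for some $t$ in range. When $a>1$ the ring $\ZZ_q$ is not a field, so I would argue that $g^\tau-1$ is invertible by tracking it modulo $p$: because $g$ reduces to a primitive root $g'$ of $\ZZ_p^*$ and $\langle g^\tau\rangle_p\neq 1$ for $\tau\not\equiv 0$, the difference $g^\tau-1$ is a unit modulo $p$ and hence a unit in $\ZZ_q$. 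Combined with the fact that $j_2-j_1\in\{\pm 1,\ldots,\pm(p-1)\}$ is nonzero, this guarantees the target is a well-defined unit, and the single-solution bound follows; I would verify separately that at least one shift $\tau$ produces a genuine coincidence so that $H_c(\mE)=1$ rather than $0$.
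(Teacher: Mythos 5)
There is a genuine gap in your crosscorrelation argument. Your bound $H_{\be_{j_1}\be_{j_2}}(\tau)\leq 1$ rests on the claim that $g^\tau-1$ is a unit in $\ZZ_q$ for every nonzero shift $\tau$, which you justify by asserting that $\langle g^\tau\rangle_p\neq 1$ ``for $\tau\not\equiv 0$''. This is false: by Fermat's little theorem $g^{p-1}\equiv (g')^{p-1}\equiv 1\pmod p$, so for every shift $\tau$ that is a nonzero multiple of $p-1$ the element $g^\tau-1$ is divisible by $p$ and hence a zero divisor in $\ZZ_q$. Such shifts exist in the range $0<\tau<d=(p-1)p^{a-1}$ precisely when $a>1$, which is exactly the regime that makes Construction 2 new (in Example 2, $p=11$, $a=2$, $d=110$, and $\tau=10,20,\ldots,100$ are all of this type). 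For those shifts your argument gives no bound whatsoever on $H_{\be_{j_1}\be_{j_2}}(\tau)$, so the claim $H_c(\mE)=1$ is not established.

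The missing case is what the paper handles through the dichotomy in its equation (7): for $t\neq 0$, the difference $\langle g^{i}-g^{i+t}\rangle_q$ is either a power of $g$ (a unit) or of the form $\sigma p$ (a nonzero multiple of $p$). In the non-unit case, $g^i(1-g^\tau)$ is a multiple of $p$ for every $i$, while $j_2-j_1$ is a unit because $0<|j_2-j_1|\leq p-1$; hence no coincidence occurs and the correlation is $0$, which is what makes the overall bound $H_c(\mE)\le 1$ go through. Adding this case would close your gap. Two smaller remarks: your worry that $(j_2-j_1)(g^\tau-1)^{-1}$ might fail to lie in $\{g^0,\ldots,g^{d-1}\}$ is vacuous, since $g$ generates $\ZZ_q^*$ and that set is therefore all of $\ZZ_q^*$; and the paper's own description of the exceptional shifts (``$t=p-1$'') is itself imprecise --- it should be all nonzero $t\equiv 0\pmod{p-1}$ --- but unlike your proposal the paper does recognize and treat the non-unit case.
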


\begin{proof}
It is easily checked that the set $\mE$ in Construciton 2 is with $p$ sequences of length $d=(p-1)p^{(a-1)}$ over $\ZZ_{q}$. For any $0\leq t \leq d-1$, $0\leq j_1,j_2\leq p-1$, the Hamming correlation $H_{\be_{j_1}\be_{j_2}}(t)$ between $\be_{j_1}$ and $\be_{j_2}$ is given by
\begin{eqnarray*}
H_{\be_{j_1}\be_{j_2}}(t)&=&\sum\limits_{i=0}^{d-1}h\left(\langle g^i+{j_1}\rangle_{q},\langle g^{i+t}+{j_2}\rangle_{q}\right).
\end{eqnarray*}

In order to compute $H_{\be_{j_1}\be_{j_2}}(t)$, we divide the problem into two cases.

Case $i)$. $j_1=j_2$.

In this case, we can easily obtain that
\begin{eqnarray*}
H_{\be_{j_1}\be_{j_1}}(t)=\left\{ {\begin{array}{*{20}{c}}
d, & t=0,\\
0, & otherwise.\\
\end{array}} \right.
\end{eqnarray*}

Case $ii)$. $j_1\neq j_2$.

Since $g=g'$ or $g=g'+p_1$ is the primitive element $\in \ZZ^{*}_{q_1}$, where $g'$ is a primitive element $\in \ZZ^{*}_{p_1}$.
For any $0\leq i\leq d-1$, $0\leq t\leq d-1$, we have
\begin{eqnarray}
\langle g^{i}-g^{i+t}\rangle_q=\left\{ {\begin{array}{*{20}{c}}
g^\rho, & t\neq p-1,\\
\sigma p, & t=p-1,\\
\end{array}} \right.
\end{eqnarray}
where $0\leq \rho\leq {d-1},1\leq\sigma <p^{a-1}$.

Thus, we have
\begin{eqnarray*}
H_{\be_{j_1}\be_{j_2}}(t)&=&\sum\limits_{i=0}^{d-1}h\left(\langle g^i+{j_1}\rangle_{q},\langle g^{i+t}+{j_2}\rangle_{q}\right)\\
&=&\sum\limits_{i=0}^{d-1}h\left(\langle g^{i}-g^{i+t}\rangle_{q},\langle j_2-j_1\rangle_{q}\right).
\end{eqnarray*}

Without loss of generality, we can assume $0\leq j_1<j_2\leq p-1$, so we have $0\leq j_2-j_1\leq p-1$.

Hence, we can obtain from equation (7) that
\begin{eqnarray*}
H_{\be_{j_1}\be_{j_2}}(t)\left\{ {\begin{array}{*{20}{c}}
1, & t\neq p-1,\\
0, & t=p-1.\\
\end{array}} \right.
\end{eqnarray*}

By summarizing the results of  Case $i)$ and Case $ii)$, we have the nontrivial Hamming correlation of $\mE$ is
\begin{eqnarray*}
H_{\be_{j_1}\be_{j_2}}(t)=\left\{ {\begin{array}{*{20}{c}}
0, & j_1=j_2,t\neq0,\\
0, & j_1\neq j_2,t=p-1,\\
1, & j_1\neq j_2,t\neq p-1.\\
\end{array}} \right.
\end{eqnarray*}

\end{proof}

\begin{Theorem}\label{t2}
The FHS set $\mS$ constructed by Construction {\rm\ref{con2}} is an optimal
$(dN,qv,\lambda; M)$ FHS set over $\ZZ_{q}\times\mF$ if $\left\lceil\frac{(dNM-v)}{(qNM-1)}\frac{dN}{qv}\right\rceil= \left\lceil {\frac{{(NM - v)}}{{(NM - 1)}}\frac{N}{v}}\right\rceil$.
\end{Theorem}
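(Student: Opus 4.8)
The plan is to follow the template of the proof of Theorem~\ref{t1}, the only change being that the one-coincidence set of Lemma~\ref{L3} is replaced by the set $\mE$ of Lemma~\ref{L4}. First I would settle the parameters by bookkeeping. Each $\be_j$ has length $d=(p-1)p^{a-1}$ and takes values in $\ZZ_q$, so the extension operator of Section~3 (with extension factor $n=d$) sends each length-$N$ sequence $\bx_i$ to a sequence $\bs^i$ of length $dN$ whose symbols lie in $\ZZ_q\times\mF$. Hence $\mS$ is a family of $M$ sequences of length $dN$ over an alphabet of size $qv$, giving the claimed type $(dN,qv,\lambda;M)$. The point worth stressing, and the reason this construction lands in a genuinely different parameter regime from Theorem~\ref{t1}, is that the extension factor $d=(p-1)p^{a-1}$ is strictly smaller than the alphabet enlargement $q=p^a$; this is precisely what produces the new length $dN$ recorded in Table~\ref{tab:1}.

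Next I would control the Hamming correlation of $\mS$. By Lemma~\ref{L4} the set $\mE$ satisfies $H_a(\mE)=0$ and $H_c(\mE)=1$, so $\mE$ is a one-coincidence sequence set and Lemma~\ref{L2} applies verbatim. Writing $\tau=N\tau_1+\tau_2$ with $0\le\tau_1\le d-1$ and $0\le\tau_2\le N-1$, Lemma~\ref{L2} yields $H_{\bs^i\bs^j}(\tau)=0$ when $i=j,\ \tau_2=0,\ \tau_1\neq0$, and $H_{\bs^i\bs^j}(\tau)\le\lambda$ in every other nontrivial case; thus $H_m(\mS)\le\lambda$. For the reverse inequality I would specialize formula~(\ref{e6}) to $\tau_1=0$: whenever $x_i(t_2)=x_j(t_2+\tau_2)$ the two index values $\omega_i(t_2)$ and $\omega_j(t_2+\tau_2)$ are forced to differ by the injectivity of each $\varphi_{f_k}$, so the inner correlation equals $H_c(\mE)=1$ and (\ref{e6}) collapses to $H_{\bs^i\bs^j}(\tau_2)=H_{\bx_i\bx_j}(\tau_2)$. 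Maximizing over all nontrivial $i,j,\tau_2$ recovers $H_m(\mX)=\lambda$, so in fact $H_m(\mS)=\lambda$.

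Finally I would invoke the Peng--Fan bound of Lemma~\ref{PF-bound} for the parameters of $\mS$. For a family of size $M$, length $dN$, and alphabet size $qv$ it reads
\[
H_m(\mS)\ \ge\ \left\lceil\frac{(MdN-qv)}{(MdN-1)}\frac{dN}{qv}\right\rceil,
\]
while optimality of $\mX$ gives $\lambda=\left\lceil\frac{(NM-v)}{(NM-1)}\frac{N}{v}\right\rceil$. The hypothesis of the theorem is exactly the assertion that these two ceilings coincide. Combining this with $H_m(\mS)=\lambda$ from the previous step shows that $\mS$ attains its own Peng--Fan bound with equality, i.e. $\mS$ is optimal.

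I expect no serious obstacle beyond what is already handled in Lemma~\ref{L4}: once the one-coincidence property for the shorter set $\mE$ is in hand, the argument runs parallel to Theorem~\ref{t1}. The single place that still needs care is the lower-bound half of $H_m(\mS)=\lambda$, where one must check that the index assignment through $\varphi_{f_k}$ never collapses two colliding positions of $\mX$ onto the same $\mE$-sequence; this is exactly where the standing hypothesis $m(\mX)\le p$ enters, guaranteeing that each $\varphi_{f_k}$ can be chosen injective into the $p$-element index set $\{0,\dots,p-1\}$ of $\mE$. The remaining step is merely the comparison of two ceiling expressions, which is absorbed into the stated hypothesis.
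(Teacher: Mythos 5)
Your proof is correct and follows essentially the same route as the paper's: parameter bookkeeping, correlation preservation via the framework lemmas, then a comparison of two Peng--Fan ceilings. There are, however, three points of divergence worth recording. First, you invoke Lemma~\ref{L2} together with Lemma~\ref{L4}, which is the right pairing; the paper's own proof cites Lemma~\ref{L1} and then Lemma~\ref{L3} (the set $\mE$ of Construction~1), an evident slip for Lemma~\ref{L4}. Second, you state the Peng--Fan bound for $\mS$ in its correct form $\left\lceil\frac{(MdN-qv)}{(MdN-1)}\frac{dN}{qv}\right\rceil$, whereas both the theorem's displayed hypothesis and the paper's proof read $\left\lceil\frac{(dNM-v)}{(qNM-1)}\frac{dN}{qv}\right\rceil$, with $v$ in place of $qv$ in the numerator and $qNM-1$ in place of $MdN-1$ in the denominator; this is a misprint of the bound, but it means your claim that the theorem's hypothesis is ``exactly'' the coincidence of your two ceilings is not literally true of the statement as printed. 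What you have actually proven is the corrected statement, and to splice your argument into the paper you would need to either flag the misprint or restate the optimality criterion with the corrected expression. Third, your reverse-inequality step (forcing $H_m(\mS)\ge\lambda$ by specializing~(\ref{e6}) to $\tau_1=0$) is sound but logically redundant: once you have $H_m(\mS)\le\lambda$ from Lemma~\ref{L2} and the Peng--Fan lower bound for $\mS$ equals $\lambda$ by hypothesis, equality is automatic. The paper simply asserts $H_m(\mS)=H_m(\mX)=\lambda$ without proof, so your extra step, while unnecessary for optimality, does supply a justification the paper omits.
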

\begin{proof}
It is easily checked that the extended FHS set $\mS$ in Construciton 2 is a set with $M$ sequences of length $dN$ over $\ZZ_{q}\times\mF$, where $d=(p-1)p^{(a-1)}$.

From Lemma {\rm\ref{L1}}, the Hamming correlation of FHS set $\mS$ is based on the correlation of set $\mE$. Thus, from Lemma {\rm\ref{L3}}, we can obtain that $H_m(\mS)=H_m(\mX)=\lambda$.
Since the optimal FHS set $\mX$ satisfies Peng-Fan bound \cite{PF}, we have $\lambda= \left\lceil {\frac{{(NM - v)}}{{(NM - 1)}}\frac{N}{v}}
\right\rceil$.

Then, we verify the optimality of the extended FHS set $\mS$. According to the Peng-Fan bound \cite{PF}, the FHS set $\mS$ with length $dN$,
the family size $M$ and the frequency slot set size $qv$ over $\ZZ_{q}\times\mF$, the optimal maximum Hamming correlation $H'_m$ of $\mS$ is
\begin{eqnarray*}
{H'_m}\ge \left\lceil\frac{(dNM-v)}{(qNM-1)}\frac{dN}{qv}\right\rceil.
\end{eqnarray*}

Thus, we have the FHS set $\mS$ in Construction 2 is optimal if $\left\lceil\frac{(dNM-v)}{(qNM-1)}\frac{dN}{qv}\right\rceil= \left\lceil {\frac{{(NM - v)}}{{(NM - 1)}}\frac{N}{v}}\right\rceil$.

\end{proof}

In particular, by generating the set $\mE$ in the finite field $\FF_{q}$, it is possible to construct an extended FHS set of length $dN$ with different extension factor $d$.

\begin{Corollary}\label{c3}
 Let $g$ be a primitive element of $\FF_{q}$, then we have $d=\ord(g)=q-1$. Then, if $m(\mX)\leq q$, the FHS set $\mS$ constructed by Construction {\rm\ref{c2}} is an optimal $((q-1)N, qv, \lambda; M)$ FHS set over $\FF_{q}\times\mF$.
\end{Corollary}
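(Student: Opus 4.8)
The plan is to mirror the proof of Theorem~\ref{t2}, replacing the ring $\ZZ_q$ by the field $\FF_q$ throughout Construction~\ref{con2}. The only genuinely new input is a field analogue of Lemma~\ref{L4}: I must show that, with $g$ a primitive element of $\FF_q$ (so $d=\ord(g)=q-1$) and
\[
\mE=\{\be_j=(g^0+j,g^1+j,\ldots,g^{d-1}+j):j\in\FF_q\},
\]
the set $\mE$ consists of $q$ sequences of length $q-1$ over $\FF_q$ that form a one-coincidence sequence set, i.e. $H_a(\mE)=0$ and $H_c(\mE)=1$. Note that here $\mE$ carries $q$ sequences (one for each $j\in\FF_q$) rather than the $p$ sequences available over $\ZZ_q$, which is precisely what relaxes the hypothesis of Step~3 from $m(\mX)\le p$ to $m(\mX)\le q$.

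For the correlation computation I would argue exactly as in Lemma~\ref{L4}, but exploit that $\FF_q$ is a field. For the autocorrelation ($j_1=j_2$), a coincidence at shift $t$ forces $g^i=g^{i+t}$; since $g$ has order $q-1$ this occurs only when $t\equiv 0\pmod{q-1}$, giving $H_a(\mE)=0$. For the crosscorrelation ($j_1\ne j_2$), a coincidence at shift $t$ is equivalent to $g^i(1-g^t)=j_2-j_1$. When $t=0$ this reads $0=j_2-j_1$, which is impossible; when $0<t<q-1$ we have $g^t\ne 1$, so $1-g^t$ is a nonzero field element and hence invertible, while $j_2-j_1\ne 0$, so $g^i=(j_2-j_1)(1-g^t)^{-1}$ is a fixed nonzero element, matched by exactly one $i\in\{0,\ldots,q-2\}$. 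Thus every crosscorrelation value is $0$ or $1$ and $H_c(\mE)=1$. This is the step where the field structure is essential: over $\ZZ_q$ the difference $g^i-g^{i+t}$ could fall on a non-unit (the exceptional shift $t=p-1$ appearing in Lemma~\ref{L4}), whereas over $\FF_q$ every nonzero difference is again a power of $g$, so no shift is exceptional.

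With the one-coincidence property in hand, the remainder is identical to Theorem~\ref{t2}. Since $m(\mX)\le q$ equals the number of sequences in $\mE$, the injective assignments $\varphi_{f_k}$ of Step~3 into the $q$-element index set of $\mE$ exist, and Lemma~\ref{L2} yields $H_m(\mS)=H_m(\mX)=\lambda$ for the extended set $\mS$, which has $M$ sequences of length $(q-1)N$ over the slot set $\FF_q\times\mF$ of size $qv$. It then remains to invoke the Peng--Fan bound (Lemma~\ref{PF-bound}) for these parameters and to check that its value coincides with $\lambda=\left\lceil\frac{(MN-v)N}{(MN-1)v}\right\rceil$.

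The main obstacle is this last optimality check. Because $H_m(\mS)=\lambda$ already, the Peng--Fan bound for $\mS$ is automatically $\le\lambda$, so optimality is equivalent to the ceiling identity
\[
\left\lceil\frac{(M(q-1)N-qv)(q-1)N}{(M(q-1)N-1)\,qv}\right\rceil=\left\lceil\frac{(MN-v)N}{(MN-1)v}\right\rceil,
\]
i.e. to showing that the slightly smaller real quantity on the left does not round down below $\lambda$. This is the exact analogue of the hypothesis carried in Theorem~\ref{t2}, and I would settle it by the same elementary estimate comparing both fractions with $\tfrac{N}{v}$; the field correlation computation above, by contrast, is routine once the invertibility of $1-g^t$ is observed.
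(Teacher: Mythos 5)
Your proposal is correct and takes essentially the route the paper intends: the paper offers no written proof of this corollary, deriving it by rerunning Construction~\ref{con2} and the argument of Theorem~\ref{t2} with $\mE$ generated in $\FF_q$, and your field analogue of Lemma~\ref{L4} (invertibility of $1-g^t$ giving $H_a(\mE)=0$, $H_c(\mE)=1$, now with $q$ rather than $p$ sequences, which is exactly what relaxes $m(\mX)\le p$ to $m(\mX)\le q$) is precisely the detail the paper leaves implicit. If anything you are more careful than the paper: you retain the Peng--Fan ceiling identity as an explicit hypothesis, whereas the corollary as printed silently drops the corresponding condition carried by Theorem~\ref{t2}, without which optimality can fail in edge cases.
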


\begin{Remark}
By means of our method, Chung's Construction B in \cite{Chung-14} is a special case of Corollary {\rm\ref{c3}} for $\gcd(q-1,N)=1$  and $\omega_i(t_2)=\sum^{i-1}_{\alpha=0}\sum^{N-1}_{t=0}h(x_\alpha(t),x_i(t_2))+\sum^{t_2-1}_{\beta=0}h(x_i(\beta),x_i(t_2))-1$ for any $0\leq i\leq M-1, 0\leq t_2\leq N-1$.
\end{Remark}

Therefore, it is possible to extend the length of $\mS$ in Construction 2 by the extension factor
$n=q_1\cdots q_r$, if $m(\mX) \leq p_1-1$, where $q_i = p^{a_i}_i , 1\leq i \leq r$, $p_1, \ldots, p_r$ are primes with $p_1 <\ldots< p_r$. In this way, Construction 2 can be applied recursively infinitely many times.

\begin{Corollary}\label{c4}
For positive integers $r$ and $a, a_1, \ldots, a_r$, let $n= q_1\cdots q_r$, where $q = p^{a}$, $q_i = p^{a_i}_i , 1\leq i \leq r$, $p, p_1, \ldots, p_r$ are primes with $p_1 <\ldots< p_r$. If $m(\mX) \leq \min\{p_1-1,p\}$, there exists an optimal
$(ndN,nqv,\lambda; M)$ FHS set over $\ZZ_{q_r}\times\cdots\times \ZZ_{q_1}\times \ZZ_{q} \times \mF$, $d=(p-1)p^{(a-1)}$.
\end{Corollary}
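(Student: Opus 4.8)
The plan is to realize the asserted FHS set as a two-stage composition: first apply Construction 2 (Theorem \ref{t2}) to the base set $\mX$ to multiply its length by $d=(p-1)p^{a-1}$ and its alphabet by $q$, and then apply the iterated Construction 1 packaged in Corollary \ref{c1} to the resulting set to multiply length and alphabet by $n=q_1\cdots q_r$. Thus the proof is essentially a chaining argument: invoke Theorem \ref{t2}, then invoke Corollary \ref{c1} with the output of the first stage as its base set, and verify along the way that the hypotheses of each stage are actually met.

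First I would feed the optimal $(N,v,\lambda;M)$ FHS set $\mX$ into Construction 2. Because $m(\mX)\leq\min\{p_1-1,p\}\leq p$, the hypothesis $m(\mX)\leq p$ of Theorem \ref{t2} holds, so Construction 2 yields a set $\mS_0$ of length $dN$ over $\ZZ_q\times\mF$ with alphabet size $qv$ and family size $M$. By Lemma \ref{L1} together with the one-coincidence property $H_a(\mE)=0,\ H_c(\mE)=1$ recorded in Lemma \ref{L4}, the maximum Hamming correlation is preserved, $H_m(\mS_0)=H_m(\mX)=\lambda$, and the optimality of $\mS_0$ is exactly the conclusion of Theorem \ref{t2} (under the mild Peng-Fan ceiling identity that, as in the proofs of Theorems \ref{t1} and \ref{t2}, leaves the bound value at $\lambda$).

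The crucial intermediate step, and the only genuinely new point, is to bound the appearance number $m(\mS_0)$, since Corollary \ref{c1} applied to $\mS_0$ demands $m(\mS_0)\leq p_1-1$. I claim $m(\mS_0)\leq m(\mX)$. A frequency slot of $\mS_0$ is a pair $(c,f_k)\in\ZZ_q\times\mF$, and every occurrence of $(c,f_k)$ arises from some appearance $(i,t_2)\in\Phi_k$ of $f_k$ in $\mX$ together with a row $t_1$ satisfying $e_{\omega_i(t_2)}(t_1)=c$. Within a single column sequence $\be_j=(g^0+j,\ldots,g^{d-1}+j)$ the entries are pairwise distinct, because $g^0,\ldots,g^{d-1}$ are distinct and adding the constant $j$ is a bijection of $\ZZ_q$; hence $c$ occurs at most once per column. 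Since $\varphi_{f_k}$ is injective on $\Phi_k$, distinct elements of $\Phi_k$ employ distinct columns $\be_j$, so the number of occurrences of $(c,f_k)$ is at most $|\Phi_k|\leq m(\mX)$. Therefore $m(\mS_0)\leq m(\mX)\leq p_1-1$.

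With this bound established I would apply Corollary \ref{c1} to $\mS_0$ in the role of the base FHS set, i.e. with $N$, $v$, $\mF$ replaced by $dN$, $qv$, $\ZZ_q\times\mF$, and with extension factor $n=q_1\cdots q_r$; its hypothesis $m(\mS_0)\leq p_1-1$ is precisely what was just verified (and since $p_1-1\leq p_i-1$ for all $i$ and, by the same distinctness argument, the appearance number is non-increasing under each recursive step, the bound persists throughout the iteration). This produces an optimal $(n\cdot dN,\,n\cdot qv,\,\lambda;M)$ FHS set over $\ZZ_{q_r}\times\cdots\times\ZZ_{q_1}\times\ZZ_q\times\mF$, which is exactly the claimed optimal $(ndN,nqv,\lambda;M)$ FHS set with $d=(p-1)p^{a-1}$. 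I expect the main obstacle to be the appearance-number bound $m(\mS_0)\leq m(\mX)$ of the third paragraph; once it is in hand, the correlation and optimality statements follow verbatim from Lemma \ref{L1}, Lemma \ref{L4} and Corollary \ref{c1}.
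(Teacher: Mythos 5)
Your proposal is correct and follows essentially the same route as the paper, which states Corollary \ref{c4} without a separate proof, treating it as the composition of Construction 2 (Theorem \ref{t2}) with the recursive Construction 1 (Corollary \ref{c1}). Your explicit verification that $m(\mS_0)\leq m(\mX)$ --- via the pairwise distinctness of the entries within each sequence $\be_j$, so that a slot $(c,f_k)$ occurs at most once per column and hence at most $|\Phi_k|$ times in total --- supplies precisely the chaining step the paper leaves implicit, and is what justifies invoking both stages under the single hypothesis $m(\mX)\leq\min\{p_1-1,p\}$.
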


\begin{Corollary}\label{c5}
 Let $g$ be a primitive element of $\FF_{q}$, then the extension factor is $d=\ord$$(g)=q-1$. Thus, if $m(\mX) \leq \min\{p_1-1,q\}$, the FHS set $\mS$ constructed by Construction {\rm\ref{c2}} is an optimal $(n(q-1)N, nqv, \lambda; M)$ FHS set over $\ZZ_{q_r}\times\cdots\times \ZZ_{q_1}\times \FF_{q} \times \mF$.
\end{Corollary}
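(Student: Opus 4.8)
The plan is to obtain $\mS$ as a two-stage composition of extensions already analyzed in the excerpt: first the field form of Construction~\ref{con2} recorded in Corollary~\ref{c3}, and then the iterated form of Construction~\ref{con1} recorded in Corollary~\ref{c1}. Since the hypothesis $m(\mX)\le\min\{p_1-1,q\}$ gives in particular $m(\mX)\le q$, Corollary~\ref{c3} applies to the optimal $(N,v,\lambda;M)$ set $\mX$ and yields an optimal $((q-1)N,qv,\lambda;M)$ FHS set $\mS'$ over $\FF_q\times\mF$. Its maximum Hamming correlation is again $\lambda$ by Lemma~\ref{L2}, because the generating set $\mE$ over $\FF_q$ is one-coincidence, i.e.\ $H_a(\mE)=0$ and $H_c(\mE)=1$, exactly as in Lemma~\ref{L4}.

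Next I would bound the maximum appearance number of the intermediate set $\mS'$, which is the datum needed to launch the second stage. Fix a symbol $(e,f)\in\FF_q\times\mF$. A coordinate of $\mS'$ can equal $(e,f)$ only inside a column of~(\ref{e4}) arising from some pair $(i,t_2)\in\Phi_f$, and within such a column the first coordinate runs through the sequence $\be_{\omega_i(t_2)}$. Since $g$ is primitive, the entries of $\be_j$ list the $q-1$ elements of $\FF_q\setminus\{j\}$ once each, so the value $e$ occurs at most once in each $\be_j$; moreover $\omega$ is injective on $\Phi_f$. Hence $(e,f)$ appears at most $|\Phi_f|\le m(\mX)$ times, so $m(\mS')\le m(\mX)\le p_1-1$.

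With this bound secured, Corollary~\ref{c1} applies to $\mS'$ with extension factor $n=q_1\cdots q_r$, since the hypothesis it requires is precisely $m(\mS')\le p_1-1$. The recursion multiplies the length by $n$ and enlarges the alphabet from $\FF_q\times\mF$ to $\ZZ_{q_r}\times\cdots\times\ZZ_{q_1}\times\FF_q\times\mF$, while Lemma~\ref{L2} keeps the maximum Hamming correlation equal to $\lambda$ at every step: the $\ZZ_{q_i}$ generating sets of Construction~\ref{con1} are one-coincidence by Lemma~\ref{L3}, and by the same counting as above the appearance number never increases, so it stays below $p_1-1\le p_i-1$ throughout. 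The output is an $(n(q-1)N,\,nqv,\,\lambda;M)$ FHS set over $\ZZ_{q_r}\times\cdots\times\ZZ_{q_1}\times\FF_q\times\mF$, which is exactly the claimed $\mS$.

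The remaining point is optimality. The field stage is optimal by Corollary~\ref{c3} itself (equivalently Theorem~\ref{t2} with $d=q-1$), so $\mS'$ already meets the Peng--Fan bound with $\lambda$. Each subsequent $\ZZ_{q_i}$ stage multiplies the length and the alphabet size by the \emph{same} factor $q_i$, so the quotient $\frac{L}{w}$ is frozen at $\frac{(q-1)N}{qv}$ while $\frac{LM-v}{LM-1}$ only increases toward $1$ as the current length $L$ grows; hence, exactly as in the closing lines of Theorem~\ref{t1}, the Peng--Fan ceiling $\lceil\frac{(LM-v)}{(LM-1)}\frac{L}{w}\rceil$ does not change and stays equal to $\lambda$. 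Thus the genuine obstacle is not the correlation computation, which is fully handled by Lemmas~\ref{L1}--\ref{L4}, but the bookkeeping that the appearance number survives the field extension so that the iterated $\ZZ_{q_i}$ stage is applicable at all; this is what the second paragraph supplies, and it is also the reason the two halves of the hypothesis $m(\mX)\le\min\{p_1-1,q\}$ line up precisely with the two extension stages.
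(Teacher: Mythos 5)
Your decomposition is exactly the route the paper intends: Corollary~\ref{c5} is stated without a separate proof because it is meant to be read off as the composition of the field version of Construction~\ref{con2} (Corollary~\ref{c3}, with $d=q-1$) with the recursive application of Construction~\ref{con1} (Corollary~\ref{c1}), and your first and third paragraphs follow precisely that path. Your second paragraph is in fact an improvement on the paper: the bound $m(\mS')\le m(\mX)\le p_1-1$ (each $\be_j$ enumerates $\FF_q\setminus\{j\}$ exactly once, so a symbol $(e,f)$ occurs at most once per column of the array, hence at most $|\Phi_f|\le m(\mX)$ times in all of $\mS'$) is never written down in the paper, yet without it Corollary~\ref{c1} cannot legitimately be invoked on the intermediate set; the same count keeps the appearance number bounded through the later $\ZZ_{q_i}$ stages.

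The gap is in your final, optimality paragraph. The Peng--Fan bound for a stage with current length $L$ and current alphabet size $w$ is $\left\lceil \frac{(LM-w)}{(LM-1)}\cdot\frac{L}{w}\right\rceil$: the numerator must carry the current $w$, not the original $v$. When $L$ and $w$ are both multiplied by $q_i$, this quantity is multiplied by $\frac{q_i(LM-1)}{q_iLM-1}<1$, so it strictly \emph{decreases} at every stage; your assertion that the relevant fraction ``only increases toward $1$'' is an artifact of freezing $v$ in the numerator (a slip the paper itself also makes in the proof of Theorem~\ref{t1}). Because the true quantity decreases, its ceiling can in principle drop below $\lambda$, in which case the extended set would no longer meet the Peng--Fan bound with equality. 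This is exactly why Theorem~\ref{t1} and Theorem~\ref{t2} carry the explicit hypothesis that the old and new ceilings coincide. Your argument therefore establishes optimality only under that additional numerical condition --- the same condition the paper silently drops when passing from its theorems to its corollaries --- and your justification for dispensing with it does not work as written.
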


Some new optimal FHS sets obtained by recursively applying Construction 2 are listed in Table 2. Since Construction 2 can be applied to any existing optimal FHSs or FHS sets, it is expected that there exist some more classes of optimal FHS sets which can be obtained from our construction, but are not listed there.

\begin{table}[tbp]\label{t2}\small
\newcommand{\tabincell}[2]{\begin{tabular}{@{}#1@{}}#2\end{tabular}}
\centering  
\caption{The parameters of some new optimal FHS Sets obtained by combining Construction 1 and Construction 2.}

\begin{tabular}{c|c|c}  
\hline
 New FHS sets &  Constraints & References\\ \hline  
$\left(nd(\alpha^m-1),nq\alpha^l,\alpha^{m-l}; \alpha^l\right)$ &  \tabincell{l}{$1\leq l \leq m$,\\$d=(p-1)p^{a-1}$, $\alpha^m< \min\{p,\lpf(n)-1\}$. \\or $d=p^a-1$, $\alpha^m< \min\{q,\lpf(n)-1\}$.} & \cite{LG}\\ \hline
$\left(nd\alpha^2,nq\alpha,\alpha; \alpha\right)$ &  \tabincell{l}{$d=(p-1)p^{a-1}$, $\alpha^2\leq \min\{p,\lpf(n)-1\}$. \\or $d=p^a-1$, $\alpha^2\leq \min\{q,\lpf(n)-1\}$.}& \cite{KUMAR,XU-16}\\ \hline
$\left(nd\alpha,nq(e+1),g; e\right)$ &  \tabincell{l}{$\alpha=eg+1, 2\leq g<e$,\\$d=(p-1)p^{a-1}$, $\alpha< \min\{p,\lpf(n)-1\}$. \\or $d=p^a-1$, $\alpha< \min\{q,\lpf(n)-1\}$.}& \cite{CHU}\\ \hline
$\left(nd(r-1),nq(e+1),g; e\right)$ &  \tabincell{l}{$r=eg+1, r-1\leq e(e+1)$,\\$d=(p-1)p^{a-1}$, $r< \min\{p,\lpf(n)-1\}$. \\or $d=p^a-1$, $r< \min\{q,\lpf(n)-1\}$.}& \cite{DING-08,HAN}\\ \hline
$\left(ndr^m-1,nqr^l,r^{m-l};r^l\right)$ &  \tabincell{l}{$1\leq l \leq m$,\\$d=(p-1)p^{a-1}$, $r^m< \min\{p,\lpf(n)-1\}$. \\or $d=p^a-1$, $r^m< \min\{q,\lpf(n)-1\}$.}&\cite{ZHOU-11}\\ \hline
$\left(\frac{ndr^m-1}{h},nqr^l,\frac{r^{m-l}-1}{h};h \right)$ &  \tabincell{l}{$\gcd(h,m)=1$, $h|r-1$, $1\leq l \leq m$,\\$d=(p-1)p^{a-1}$, $r^{m-l}< \min\{p,\lpf(n)-1\}$. \\or $d=p^a-1$, $r^{m-l}< \min\{q,\lpf(n)-1\}$.}&\cite{ZHOU-11}\\ \hline

\end{tabular}
\\
 In  Table 2, $\lpf(n)$ denotes the least prime factor of $n$, $q=p^a$ for a prime $p$, $r$ is a prime power for a prime $\alpha$.
\end{table}

Construction 2 in the case of $M= 1$ leads to the construction of a new single FHS. Based on Construction 2, it is possible to obtain new optimal FHSs with respect to the Lempel-Greenberger bound.

\begin{Corollary}\label{c6}
Assume that there exists an optimal
$(N, v, \lambda)$-FHS $\bs$ with respect to the Lempel-Greenberger bound, defined over $\mF$. For positive integers $r$ and $a, a_1, \ldots, a_r$,
$n= q_1\cdots q_r$, where $q=p^a, q_i = p^{a_i}_i , 1\leq i \leq r$, $p, p_1, \ldots, p_r$ are primes with
$p_1 <\ldots< p_r$. Then, there exists an optimal
$(ndN,nqv,\lambda)$-FHS over $\ZZ_{q_r}\times\cdots\times \ZZ_{q_1} \times \ZZ_{q}\times \mF$, if $m(\mX) \leq \min\{p_1-1,q\}$, $d=p^a-1$, or $m(\mX) \leq \min\{p_1-1,p\}$, $d=(p-1)p^{(a-1)}$.
\end{Corollary}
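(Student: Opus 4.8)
The plan is to realize the asserted FHS as the $M=1$ specialization of the recursive extension already used for FHS sets, replacing the Peng--Fan bound by the Lempel--Greenberger bound in the optimality check. Concretely, I would first apply Construction~\ref{con2} to the given single optimal FHS $\bs$: in the case $d=(p-1)p^{a-1}$ this uses the set $\mE$ of Lemma~\ref{L4} and needs $m(\bs)\le p$, while in the case $d=p^a-1$ it uses the field variant of Corollary~\ref{c3} and needs $m(\bs)\le q$. This yields a single extended FHS of length $dN$ over an alphabet of size $qv$. I would then feed that sequence into Construction~\ref{con1} repeatedly with extension factors $q_1,\dots,q_r$, each step scaling both length and alphabet by $q_i$ and requiring the current sequence's maximum appearance number to be at most $p_1-1$. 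The end product has length $ndN$ and alphabet size $nqv$ over $\ZZ_{q_r}\times\cdots\times\ZZ_{q_1}\times\ZZ_q\times\mF$, exactly as in the set-valued Corollary~\ref{c4}.

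Two bookkeeping points make this recursion legitimate under the single hypothesis $m(\bs)\le\min\{p_1-1,p\}$ (resp.\ $\le\min\{p_1-1,q\}$). First, because each original frequency is paired with pairwise distinct sequences of $\mE$ through the injective labelling $\omega$, and because for a fixed column (hence a fixed label $\omega(t_2)$) the entry $e_{\omega(t_2)}(t_1)$ attains any prescribed value of the ring for at most one row $t_1$, the maximum appearance number never increases under an extension step; hence $m$ stays below the relevant threshold throughout and every injective $\omega$ needed downstream exists. Second, since $M=1$ there is no crosscorrelation, so Lemma~\ref{L1} (with $H_a(\mE)=0$ and $H_c(\mE)=1$ from Lemmas~\ref{L3} and \ref{L4}) collapses to: the shift $\tau=N\tau_1+\tau_2$ contributes $0$ when $\tau_2=0,\tau_1\ne0$ and at most $\lambda$ when $\tau_2\ne0$. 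Thus the maximum Hamming autocorrelation of the final FHS is at most $\lambda$.

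It then remains only to match this upper bound with the Lempel--Greenberger lower bound at the extended parameters. Writing $\mathrm{LG}(L,V)$ for the bound value at length $L$ and alphabet size $V$, the Lempel--Greenberger bound guarantees that the autocorrelation of the final sequence is at least $\mathrm{LG}(ndN,nqv)$, while optimality of $\bs$ gives $\lambda=\mathrm{LG}(N,v)$. Hence it suffices to prove the invariance $\mathrm{LG}(ndN,nqv)=\mathrm{LG}(N,v)$, after which the sandwich $\lambda=\mathrm{LG}(ndN,nqv)\le H_a\le\lambda$ forces equality and optimality simultaneously, with no separate achievability argument required.

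The main obstacle is exactly this invariance of the Lempel--Greenberger value, and it splits into the two scalings occurring in the recursion. For a Construction~\ref{con1} step the length and alphabet scale by the same factor $q_i$; writing the bound in the closed form $\mathrm{LG}(L,V)=\lceil (L-r)(L+r-V)/(V(L-1))\rceil$ with $r=\langle L\rangle_V$, one checks that the residue becomes $q_i r$, the quotient $\lfloor L/V\rfloor$ is unchanged, and the un-rounded value decreases by strictly less than one, so the ceiling is preserved under a mild numerical condition of the same flavour as the one the authors note is ``easy to satisfy'' in the proof of Theorem~\ref{t1}. The genuinely delicate case is the single Construction~\ref{con2} step, where the length scales by $d=(p-1)p^{a-1}$ but the alphabet by $q=p^a$, so that $\lfloor dN/(qv)\rfloor$ and $\langle dN\rangle_{qv}$ must be re-expressed carefully before the two ceilings can be compared; this is the Lempel--Greenberger analogue of the ceiling identity verified with the Peng--Fan bound in Theorem~\ref{t2}, and showing that the two ceilings coincide is where the real work lies.
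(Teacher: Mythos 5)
Your overall route --- specialize Construction~\ref{con2} (or its field variant, Corollary~\ref{c3}) to the case $M=1$, then iterate Construction~\ref{con1} with factors $q_1,\dots,q_r$, controlling the autocorrelation by $\lambda$ via Lemmas~\ref{L1}, \ref{L3} and \ref{L4} --- is exactly how the paper intends Corollary~\ref{c6} to be read: the paper gives no proof of this corollary at all, only the remark that ``Construction 2 in the case of $M=1$ leads to the construction of a new single FHS,'' and your recursion is the single-sequence analogue of Corollary~\ref{c4}. Your two bookkeeping points are correct and are details the paper uses silently: in particular, the observation that $m(\cdot)$ does not increase under an extension step (because each $\be_\delta$ takes any prescribed value of the ring at most once, so each occurrence of a frequency slot spawns at most one occurrence of each new pair) is genuinely needed to make the recursive application legitimate, and it appears nowhere in the paper.

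The gap is the step you explicitly leave open: the invariance $\mathrm{LG}(ndN,nqv)=\mathrm{LG}(N,v)$ of the Lempel--Greenberger value. This cannot be deferred as ``the real work,'' because it is not unconditionally true, so no amount of work will close it without changing the statement. Exactly as in the Peng--Fan computation in the proof of Theorem~\ref{t1}, the un-rounded bound strictly decreases under extension: for one Construction~\ref{con1} step it is multiplied by $n(N-1)/(nN-1)<1$, and since the Lempel--Greenberger bound is floor-based, $H_a \geq \bigl\lfloor (N-r)(N+r-v)/\bigl(v(N-1)\bigr)\bigr\rfloor$ with $r=\langle N\rangle_v$ (you state it with a ceiling, which is a separate slip), the rounded value drops by one whenever the original un-rounded value is an integer, or more generally has fractional part smaller than the decrease. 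In those parameter ranges your sandwich $\lambda=\mathrm{LG}(ndN,nqv)\le H_a\le\lambda$ simply fails at its first equality and optimality of the extension cannot be concluded. This is precisely why Theorems~\ref{t1} and \ref{t2} carry the bound-equality hypothesis; Corollary~\ref{c6} as printed (and your proof) silently drop it. To repair the argument you must either restore that hypothesis to the statement, or prove that the stated conditions on $m(\bs)$, $n$, $q$, $d$ force the two rounded values to coincide --- and the latter is false in general, so the honest fix is the former. Note also that the delicate Construction~\ref{con2} step you flag (length scaled by $d=(p-1)p^{a-1}$, alphabet by $q=p^a$) makes the discrepancy worse, not better, since there the un-rounded value is multiplied by a factor bounded away from $1$ when $a>1$.
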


\begin{Example}
\itemindent12pt
\item[Step {\rm1}:]
We select an optimal $(8, 3, 3; 3)$ FHS set $\mY=\{\by_{0}, \by_{1},
\by_{2}\}$ over $\ZZ_3$, such that
\begin{eqnarray*}
\by_{0}=\{0,2,2,1,0,1,1,2\}, \\
\by_{1}=\{1,0,0,2,1,2,2,0\},\\
\by_{2}=\{2,1,1,0,2,0,0,1\}.
\end{eqnarray*}
So, we have $m(\mY)=8$.

\item[Step {\rm2}:]
Select $a=2$, and $q=p^{2}=121$, $p=11>m(\mY)=8$. Choose $g=2$ is the primitive element of $\ZZ^*_{p}$ and $\ZZ^*_{q}$.

Generate a sequence set $\mE=\{\be_j=(g^0+j, g^1+j, \cdots, g^{d-1}+j):0\leq j\leq 10\}$, which can be written as
\begin{eqnarray*}
&&\be_{0}=\{1,2,4,8,16,32,64,7,14,28,56,112,103,85,49,98,75\cdots\}, \\
&&\be_{1}=\{2,3,5,9,17,33,65,8,15,29,57,113,104,86,50,99,76\cdots\},\\
&&\vdots\\
&&\be_{10}=\{11,12,14,18,26,42,74,17,24,38,66,1,113,95,59,108,85\cdots\}.
\end{eqnarray*}
\item[Step {\rm3}:]
For any $0\leq i\leq2$, $0\leq t_2\leq 7$, a special expression of $\omega=\{\omega_0, \omega_{1},\omega_{2}\}$ can be given by
\begin{eqnarray*}
&&\omega_{0}=\{0,0,1,0,1,1,2,2\}, \\
&&\omega_{1}=\{3,2,3,3,4,4,5,4\},\\
&&\omega_{2}=\{6,5,6,5,7,6,7,7\}.
\end{eqnarray*}
\item[Step {\rm4}:]
Then we can construct the FHS set $\mS=\{\bs_{0}, \bs_{1}, \bs_{2}\}$ by the Construction 2.
\begin{eqnarray*}
\bs_{0}&=&\{(1,0),(1,2),(2,2),(1,1),(2,0),(2,1),(3,1),(3,2),(2,0),(2,2),(3,2),(2,1), \ldots,\\
            &&(92,1),(93,1),(93,2),(61,0),(61,2),(62,2),(61,1),(62,0),(62,1),(63,1),(63,2)\}, \\
\bs_{1}&=&\{(4,1),(3,0),(4,0),(4,2),(5,1),(5,2),(6,2),(5,0),(5,1),(4,0),(5,0),(5,2),\ldots,\\
           &&(95,2),(96,2),(95,0),(64,1),(63,0),(64,0),(64,2),(65,1),(65,2),(66,2),(65,0)\},\\
\bs_{2}&=&\{(7,2),(6,1),(7,1),(6,0),(8,2),(7,0),(8,0),(8,1),(8,2),(7,1),(8,1),(7,0), \ldots,\\
            &&(97,0),(98,0),(98,1),(67,2),(66,1),(67,1),(66,0),(68,2),(67,0),(68,0),(68,1)\}.
\end{eqnarray*}
The FHS $\bs_{0}$ over $\ZZ_{121}\times\ZZ_3$ can be obtained from the $110\times 8$ array in $(8)$, that is
\begin{eqnarray}
\left( \small{\begin{array}{*{10}{c}}
(1,0)&(1,2)&(2,2)&(1,1)&(2,0)&(2,1)&(3,1)&(3,2)\\
(2,0)&(2,2)&(3,2)&(2,1)&(3,0)&(3,1)&(4,1)&(4,2)\\
(4,0)&(4,2)&(5,2)&(4,1)&(5,0)&(5,1)&(6,1)&(6,2)\\

\vdots&\vdots&\vdots&\vdots&\vdots&\vdots&\vdots&\vdots\\

(91,0)&(91,2)&(92,2)&(91,1)&(92,0)&(92,1)&(93,1)&(93,2)\\
(61,0)&(61,2)&(62,2)&(61,1)&(62,0)&(62,1)&(63,1)&(63,2)
\end{array}} \normalsize\right).
\end{eqnarray}
Then, the maximum Hamming correlation of $\mS$ can be seen in Figure 2.

\begin{figure}\centering
\includegraphics[width=\textwidth,bb=90 619 535 733]{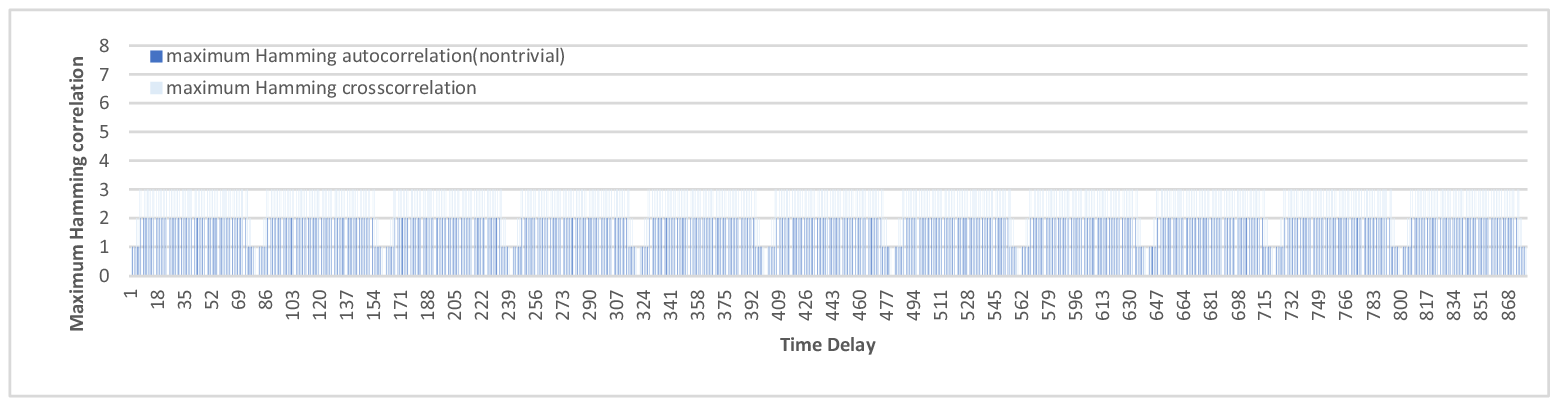}
\caption{The maximum Hamming correlations of $\mS$ in Example 2}
\end{figure}
The maximum Hamming correlation in Example 2 is  $\lambda=3$. Thus, It can be verified that $\mS$ is an optimal $(880,363,3;3)$ FHS
set. In a similar way, original FHS set $\mY$ can be extended to an optimal FHS set with new parameters, and the extension factor $d$ should not co-prime to the length $N=8$ of $\mY$.
\end{Example}

\section{Conclusions}

In this paper, we present a framework of constructing optimal FHS sets based on the designated direct product, which increase the length and alphabet size of the original FHS set, but preserve its maximum Hamming correlation. The framework generalize the previous extension constructions of optimal FHS sets.
Under the framework, we obtain infinitely many new optimal FHS sets by combining a family of sequences that are newly constructed in this paper with some known optimal FHS sets. Compared with the previous extension methods in [25,26], our construction remove the constraint requiring that the expansion factor is co-prime with the length of original FHSs and get new parameters, as shown in Table \ref{tab:1}. As a result, we have a great flexibility of choosing parameters of FHS sets for a given frequency hopping spread spectrum system.

\end{document}